\algnewcommand{\LineComment}[1]{\State \(\triangleright\) #1}
\newcommand{\COMMENT}[1]{}
\newtheorem{thm}{Theorem}
\newtheorem{obs}{Observation}
\newtheorem{lemma}{Lemma}
\newcommand{\RFULLRED}{rFullRed}
\newcommand{\RSMARTRED}{rSmartRed}
\newcommand{\NORED}{NoRed}
\newcommand{\PTOP}{pTop}
\newcommand{\PSMARTRED}{pSmartRed}
\newcommand{\skewedSet}{Whole}
\newcommand{\moreSkewedSet}{Skewed}
\newcommand{\mostSkewedSet}{MostSkewed}
\begin{document}

\title{Tail-Tolerant Distributed Search}
\numberofauthors{3}
\author{
\alignauthor
Naama Kraus\\
	\affaddr{Viterbi EE Department}\\
    \affaddr{Technion, Haifa, Israel}\\
\alignauthor
David Carmel\\
	\affaddr{Yahoo Research}
\alignauthor
Idit Keidar\\
	\affaddr{Viterbi EE Department}\\
    \affaddr{Technion, Haifa, Israel}\\
    \affaddr{and Yahoo Research}
}


\maketitle

\begin{abstract}
Today's search engines process billions of online user queries a day over huge collections of data.
In order to scale, they
distribute query processing among many nodes,
where each node holds and searches over a subset of the index called \emph{shard}.
Responses from some nodes occasionally fail to arrive within a reasonable time-interval 
due to various reasons,
such as high server load and network congestion.
Search engines typically need to respond in a timely manner, 
and therefore skip such \emph{tail latency} responses, 
which causes degradation in search quality.
In this paper, we tackle response misses due to high tail latencies 
with the goal of maximizing search quality.

Search providers today use redundancy in the form of Replication 
for mitigating response misses, 
by constructing multiple copies of each shard and searching all replicas.
This approach is not ideal,
as it wastes resources on searching duplicate data.
We propose two strategies to reduce this waste.
First, we propose \emph{\RSMARTRED{}}, 
an optimal shard selection scheme for replicated indexes.
Second, when feasible, we propose to replace Replication with \emph{Repartition},
which constructs independent index instances instead of exact copies.
We analytically prove that \RSMARTRED{}'s selection is optimal for Replication,
and that Repartition achieves better search quality compared to Replication.
We confirm our results with an empirical study using two real-world datasets,
showing that \RSMARTRED{} improves recall compared to currently used approaches.
We additionally show that Repartition improves over Replication in typical scenarios.
\end{abstract}

{



\keywords{Tail Latency, Distributed Search, Approximate Search, Replication}

\section{Introduction}

Commercial search engines serve tens of thousands of online queries a second, 
covering corpora with billions of documents. 
In order to scale with the massive data, a search service is typically deployed on a cluster of nodes, which jointly implement \emph{distributed search (DiS)}~\cite{Callan00distributedinformation,Google2003,WebSearchScalability2011,Earlybird2012,Maguro2013,JalapartiSIGCOMM13,YunSIGIR15,ZetaSocc12,Dean13}. 
A common DiS approach
is to partition the documents into subsets called \emph{shards}, 
where each shard is assigned to a node in which it is locally indexed and searched~\cite{Google2003,WebSearchScalability2011,Earlybird2012,Dean13}. 
In order to reduce the computational cost,
it is common to employ \emph{approximate search}~\cite{WebSearchScalability2011,Puppin2010,Graceful2011,Callan00distributedinformation,JalapartiSIGCOMM13,YunSIGIR15,ZetaSocc12,Dean13}, 
whereby queries are sent to only a subset of the shards 
that are deemed most likely to satisfy the query.

Search providers nowadays aim to deliver results to the client within a few hundreds of milliseconds, 
expecting processing times of tens of milliseconds from the back-end search service~\cite{timeout1,timeout2,ZetaSocc12,Dean13}. 
Unfortunately, responses from nodes may sometimes take excessively long to arrive; 
this occurs due to many reasons, including network or server load, 
background processing, misconfiguration, crashes, etc.
This phenomenon is called the \emph{tail latency} problem -- the problem that tail (high percentile) latencies are much larger than the average latency~\cite{JalapartiSIGCOMM13,YunSIGIR15,ZetaSocc12,Dean13}.
Given that an increase in search response time directly implies loss in revenue~\cite{JalapartiSIGCOMM13,Brutlag09,Velocity09},
search engines typically set strict timeouts and ignore late responses~\cite{ZetaSocc12,Dean13}.
Consequently, some of the relevant results are missed,
entailing degradation of search quality~\cite{JalapartiSIGCOMM13,ZetaSocc12,Dean13,Graceful2011}.

Because in large data centers high tail latencies 
are the norm~\cite{JalapartiSIGCOMM13,Dean13,YunSIGIR15,ZetaSocc12,Hotnets12}, 
commercial search engines often deploy shards over a replicated storage layer, 
which, to guarantee timely responses, directs queries to all replicas of the requested shard~\cite{Google2003,ElasticSearchRepLevel,SolrRepLevel,JalapartiSIGCOMM13,Hotnets12,Dean13,GomezPantojaMCB11,Graceful2011,WebSearchScalability2011}.  
Although Replication is the standard approach to building fault-tolerant services~\cite{DistComputingBook}, 
we observe that in the context of search it is not ideal. 
This is because the challenges in the two cases are different: whereas classical fault-tolerance is concerned with services that are either available or unavailable,  in search, the \emph{quality} of the results is of essence. 
In this context, accessing multiple replicas of the same shard can be wasteful~\cite{Hotnets12,JalapartiSIGCOMM13,Dean13}, 
most notably when result misses are infrequent; 
a better use of resources could be accessing additional shards instead of additional copies of the same shard.

In this paper we consider tail-tolerance~\cite{Dean13} as a first class citizen in distributed search design. 
We study the problem of \emph{tail-tolerant distributed search}, 
whose goal is to maximize search quality when responses are missed due to high tail latencies.  
We suggest two improvements over the standard approach. 
First, we present \emph{\RSMARTRED{}}, 
an optimal selection scheme for replicated distributed search indexes.
Given a query, \RSMARTRED{} considers each shard's probability to satisfy the query,
as well as the probability to miss the shard's results,
in determining the number of replicas to select per shard.
Second, when feasible, we propose to employ \emph{Repartition}, 
an alternative to Replication that reduces the waste due to searching redundant shards.
Repartition randomly constructs independent partitions of the index
instead of exact copies.

Following the seminal work of Lv et al.~\cite{Multi-probeLSH}, 
we use \emph{success probability} and \emph{recall} metrics for measuring search quality.
We analyze the success probability to find a document relevant to a given query 
using a different DiS algorithms;
we prove that \RSMARTRED{}'s selection scheme is optimal for Replication,
and that Repartition improves over Replication for any selection scheme.
We confirm our analysis by conducting an empirical study using 
the Reuters RCV1 and Livejournal real-world datasets.
Our experiments show that \RSMARTRED{} achieves higher recall than techniques used today.
We further show the superiority of Repartition over Replication when excessive latencies are infrequent.

The rest of this paper is organized as follows:
We start by reviewing prior art in Section~\ref{sec:rel}.
In Section~\ref{sec:model}, 
we present a model and problem definition for tail-tolerant DiS.
In Section~\ref{sec:redundancy}, 
we present \RSMARTRED{} and Repartition,
our two novel tail-tolerant distributed search strategies.
In Section~\ref{sec:analysis} we prove that \RSMARTRED{} is optimal for Replication and that Repartition outperforms Replication.
In Section~\ref{sec:emp} we detail our empirical study.
We conclude the paper in Section~\ref{sec:conc}.

\section{Related Work}
\label{sec:rel}
Fault-tolerant distributed systems have been extensively explored in the literature
and standard textbooks, e.g.,~\cite{DistComputingBook}.
This line of work typically considers a binary availability model,
where the service is either available or unavailable,
whereas distributed search's availability can be captured by a finer-grain notion of search quality.

A wealth of prior art explores distributed search
(refer to~\cite{Callan00distributedinformation,WebSearchScalability2011} for a comprehensive overview),
where the index is partitioned into shards and distributed among multiple nodes.
\COMMENT {
and a distributed query processing is applied at runtime.

Index partitioning can be either \emph{term-based},
whereby index subsets consist of postings lists of disjoint subsets of terms,
or \emph{document-based}, in which case index subsets consist of postings lists of disjoint 
subsets of documents. 
In the document-based approach, documents are partitioned either randomly or 
according to their characteristics such as quality, popularity, language, topic, 
and geographic location.
}
At runtime, a \emph{broker} handling the query distributes it to nodes,
awaits their responses, aggregates the responses, and sends the query result to the user. 
The broker may employ search over all shards,
but to avoid excessive computation, it is more common to use
approximate search~\cite{Callan00distributedinformation,WebSearchScalability2011,Kulkarni2015,Gionis99,JalapartiSIGCOMM13},
which selects a subset of the shards to search over.
The selection of shards to search over is either random~\cite{WebSearchScalability2011,JalapartiSIGCOMM13}, 
or based on the estimated likelihood of the shards to contain results that are relevant to the query~\cite{Callan00distributedinformation,Gionis99}.

Most existing academic work on distributed search does not consider tail-tolerance.
Nevertheless, high tail latency is a serious problem in practice, 
and so industrial solutions must take it into consideration~\cite{Dean13,ElasticSearchRepLevel,SolrRepLevel,JalapartiSIGCOMM13,YunSIGIR15,ZetaSocc12}.
The ubiquitous approach to dealing with high tail latencies is truncating the tail, 
namely, responding to the user without waiting for responses from all nodes~\cite{Dean13,JalapartiSIGCOMM13,YunSIGIR15,ZetaSocc12,Hotnets12}. 
The rationale behind this approach is that
``returning good results quickly is better than returning the best results slowly''~\cite{Dean13}.
To compensate for the omitted responses, it is common to use  
redundancy in the form of Replication~\cite{Google2003,ElasticSearchRepLevel,SolrRepLevel,JalapartiSIGCOMM13,Hotnets12,Dean13,GomezPantojaMCB11,Graceful2011,WebSearchScalability2011}.
Note that commercial search engines~\cite{Dean13,JalapartiSIGCOMM13,YunSIGIR15,ZetaSocc12} apply 
engineering decisions (typically architecture-specific)
and other optimizations to reduce the tail latency of the search workflow;
such strategies are orthogonal to our research, and are
not sufficient by themselves~\cite{Dean13,Hotnets12}. 
Replication complements these optimizations, and is the focus of this work.

Prior art observed that Replication incurs resource waste 
due to duplicate search operations~\cite{Hotnets12,JalapartiSIGCOMM13,Dean13}.
Commercial search engines~\cite{Dean13,JalapartiSIGCOMM13} decrease this waste
by combining two techniques.
First, they only re-issue a search request for slow shards,
and second, 
they cancel ongoing duplicate requests upon learning 
that they are not likely to contribute much to the search quality.
Although these strategies were shown to be useful to some extent,
the approach of simultaneously sending multiple copies of each request is still commonly used
despite the waste it incurs~\cite{Hotnets12,Dean13}.
In this paper, we tackle Replication's inherent waste using two improvements:
First, we propose a simple optimal algorithm for adjusting the replication level 
for each shard when processing a given query.
Second, we propose an alternative approach to redundancy, which further decreases waste.

\section{Model and Problem Definition}
\label{sec:model}
In this section, we provide a brief background on the primary components of a search system, 
and present the \emph{distributed search (DiS)} model that we consider.
We then present the problem of search quality degradation due to high tail latency in DiS,
and the search quality metrics that we use.

\subsection{Search}
\label{sec:search_basics}
The goal of a search system is to retrieve relevant documents
to a given query from a given document collection $\mathcal{D}$~\cite{Manning2008}.
Typically, a search service consists of two basic primitives:
\begin{description}
	\item[Indexing] pre-processes $\mathcal{D}$
    and indexes the documents into a persistent data structure called \emph{inverted index},
    which we refer to shortly as \emph{index} in this paper.
	\item[Query processing] 
    uses the index to retrieve a list of documents ranked according to their estimated relevance to the query.
\end{description}

Centralized  search, deployed on a single machine, does not scale with the size of $\mathcal{D}$~\cite{Google2003,WebSearchScalability2011},
and so is not used in practice for searching large data collections.

\subsection{Distributed search}
\label{sec:dist_search}
Distributed search scales the search service
by distributing indexing and query processing over multiple nodes~\cite{Callan00distributedinformation,Google2003,WebSearchScalability2011,Earlybird2012,Maguro2013,JalapartiSIGCOMM13,YunSIGIR15,ZetaSocc12,Dean13}.
We consider the common approximate search approach for distributed query processing~\cite{WebSearchScalability2011,Puppin2010,Graceful2011,Callan00distributedinformation,JalapartiSIGCOMM13,YunSIGIR15,ZetaSocc12,Dean13},
which in order to reduce computational costs, 
submits each query to only
a selected subset of the nodes. 
Note that in order to avoid missing important (e.g., popular) results,
search engines commonly dedicate certain nodes to storing important documents,
which are always searched~\cite{WebSearchScalability2011,Earlybird2012};
we consider here an approximate search over the rest of the nodes.

We assume a cluster of $n>1$ nodes connected by a fast network.
We consider the common \emph{document-based} approach to DiS, 
whereby each node holds and performs search on some subset of $\mathcal{D}$~\cite{Google2003,WebSearchScalability2011,Earlybird2012,Dean13}.
To implement DiS, one needs to address two aspects:
partitioning and shard selection.
Figure~\ref{fig:dis_architecture} illustrates DiS architecture, which we further detail in the next paragraphs.

\begin{figure}[hbt]
	\centering
    \includegraphics[scale=0.25]{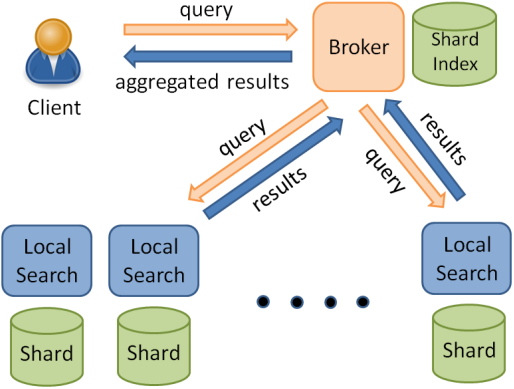}
    \vspace{1em}
     \caption{Illustration of distributed search (DiS) architecture supporting approximate search. 
              At indexing time, DiS partitions the document collection into shards,
              which are distributed among the nodes.
              At runtime, a broker module serves clients' queries:
              for each query, the broker first selects a subset of the shards to search over.
              Each node searches its shard locally and returns its search results to the broker.
              Finally, the broker aggregates the nodes' results and returns them to the client.}
	\label{fig:dis_architecture}
\end{figure}

\paragraph*{Partitioning}
At the indexing stage,
DiS applies a \emph{partitioning scheme} to partition $\mathcal{D}$
into a set of $n$ pairwise disjoint \emph{shard}s $D_j \subset \mathcal{D}$, 
$\mathcal{D} = \bigcup_{j=1}^n D_j$.
Each shard $D_j$ is assigned to a separate node where it is locally indexed.
One common approach is similarity-based partitioning~\cite{WebSearchScalability2011,Callan00distributedinformation,Puppin2010,Kulkarni2015},
e.g., LSH~\cite{LSH_Indyk1998,Gionis99}, which constructs shards of similar documents.
LSH randomly selects a hash function that
maps each document $d \in \mathcal{D}$ into its corresponding shard,
where the probability that two documents are mapped to the same shard
grows with their similarity~\cite{Charikar02,Chierichetti12}.
Note that LSH is a natural choice for distributed partitioning,
since parallelizing it
is straightforward because hashing a document does not require any information about other documents.

\paragraph*{Shard selection}
A broker module handles clients' search requests~\cite{WebSearchScalability2011}.
At runtime, the broker 
accepts an input query and submits the query to the nodes,
each of which locally searches its shard  
and returns to the broker a ranked list of results that it finds most relevant to the query. 
The broker then collects and merges the shards' results
and returns a final result set to the caller.

The broker uses a \emph{shard selection scheme} in order
to select a subset of $t \leq n$ 
shards to send the query to. 
Typically, the
selection scheme uses a \emph{shard index},
which holds the partition's meta-data.
More specifically, it maps shard identifiers to the nodes where they reside,
and optionally maintains some compact representation of each shard's content.
The shard index is constructed during the indexing stage,
it is typically centralized and replicated for availability.

Given a query, the shard selection scheme approximates 
a probability distribution 
over the shards,
associating with each shard $D_j$ the estimated probability that it contains a relevant document to the query;
the latter is called the shard's \emph{success probability} for the query.

Shard selection may use the simple \emph{Random}~\cite{WebSearchScalability2011,JalapartiSIGCOMM13} approach, 
which does not employ a shard representation, and
randomly selects shards independently of the input query.
Note that Random induces a uniform success probability distribution over the shards.
A more effective approach is 
to select shards that are deemed most likely to satisfy the query
(refer to~\cite{WebSearchScalability2011} for an overview of selection methods).
One popular method is ReDDE~\cite{REDDE},
which represents a shard using a random sample of its content.
At the indexing stage, 
ReDDE randomly samples documents from each shard
and indexes them into the shard index\footnote{In ReDDE, the shard index is commonly called a centralized sample index (CSI).}.
At shard selection time, given a query $q$, 
ReDDE retrieves from the shard index a set of 
documents that are most relevant to $q$,
and based on them, selects $t$ shards that are 
most likely to contain documents relevant to $q$.
In our experiments, we use CRCS Linear~\cite{Shokouhi2007}
to approximate the success probability distribution over the shards,
which we detail in Section~\ref{sec:emp}.

\COMMENT {
\paragraph*{CORI}
CORI represents a shard using a centroid vector of the shard's documents,
and estimates the probability that it contains relevant documents
according to the similarity of its centroid to the query.
To create the representation vector,
CORI weights each term according to its \emph{term frequency (TF)} and \emph{inverse collection frequency (ICF)},
where $TF$ is proportional to the term's frequency in the shard,
and $ICF$ is proportional to the number of shards that contain the term.
CORI then indexes the centroid representation for each shard into the centralized shard index.
Given a query, CORI searches the shard index and selects the shards 
that are most similar to the query.
}

\subsection{The impact of high tail latency}
\label{sec:ftSearch}
We now extend DiS to consider high tail latency.
In order to provide search results in a timely manner (typically a search latency of few hundreds of milliseconds ~\cite{timeout1,timeout2,Dean13,ZetaSocc12}),
the broker waits for responses from nodes up to a fixed timeout
that is given to DiS as a parameter~\cite{ZetaSocc12,Dean13}.
The broker collects results from the nodes that respond on time,
and drops the results of the slow nodes~\cite{JalapartiSIGCOMM13,ZetaSocc12,Dean13}.
A node may fail to return its results with the desired latency due to various reasons: 
E.g., it may be temporarily down due to hardware or software problems, be overloaded by other queries, 
or lose messages due to network failures or loads~\cite{Dean13}.
We assume that each node fails to respond on time with some \emph{miss probability}.
For simplicity, we assume that each node fails to respond independently of other shards,
and that the miss probability is common to all nodes.
We denote the miss probability by $f$.
When a node's response is skipped, some of the relevant results may be missing from the final result set, 
which entails degradation of search quality~\cite{JalapartiSIGCOMM13,ZetaSocc12, Dean13,Graceful2011}.
Note that the search quality of DiS with approximate search is typically lower than that of centralized search 
even without misses, as the search is restricted to a subset of the collection. 
Result misses due to high tail latencies further degrade search quality;
our goal is to ameliorate this.

\subsection{Search quality metrics}
\label{sec:metric}
We analyze the quality of a DiS algorithm $A$ through its
\emph{success probability}:
Given a query $q$ and a unique document $d_q \in \mathcal{D}$ relevant to $q$,
$SP(A,q)$ is the probability that $A$ finds $d_q$.

We empirically measure the search quality of a DiS algorithm $A$ 
by comparing its results with those of centralized 
search, which has full access to all documents~\cite{Puppin2010,Multi-probeLSH,Graceful2011}.
More specifically, let $S^m_C(q)$ be the top-m search results for query $q$ 
according to the centralized search, 
and let $S^m_A(q)$ be the top-m results of a DiS algorithm $A$. 
We measure the search quality of $A$ for query $q$ by the \emph{recall} it achieves relative to
a centralized system:
\[
Recall@m(q) \triangleq \frac{|S^m_C(q)\cap S^m_A(q)|}{|S^m_C(q)|}.
\]
Note that $Recall@m(q) \in [0,1]$ since $S^m_A(q) \subseteq S^m_C(q)$.
We measure the search quality of a DiS algorithm $A$, 
$Recall@m$, by averaging $Recall@m(q)$ over all queries.

\section{Tail-Tolerant DiS}
\label{sec:redundancy}
Existing DiS systems mitigate search quality degradation using
Replication~\cite{Google2003,ElasticSearchRepLevel,SolrRepLevel,JalapartiSIGCOMM13,Hotnets12,Dean13,GomezPantojaMCB11,Graceful2011,WebSearchScalability2011}.
We propose two improvements to currently used approaches:
\RSMARTRED{},
an optimal shard selection scheme for Replication,
and Repartition,
an alternative method to redundancy which improves over Replication.

\subsection{Replication}
\label{sec:model_replication}
Given a redundancy level configuration parameter $r>1$ and a partition $\left\{D_1, \ldots, D_n\right\}$,
Replication constructs $r$ identical copies of that partition.
Large-scale search systems usually deploy their service over multiple data centers and use dozens of replicas~\cite{Google2003},
whereas smaller-scale search systems that run in a single data center 
typically use a few copies per shard, e.g., $r=3$~\cite{ElasticSearchRepLevel,SolrRepLevel}.
A shard selection scheme that uses Replication needs to take an additional aspect into account:
Besides identifying the shards most likely to satisfy the query, 
it needs to also decide how many replicas of each shard to contact.
We discuss three approaches for doing so.
We assume that all approaches are given a fixed budget of $tr$ shards to select 
out of all $nr$ shards.

\subsubsection{Existing shard selection approaches}
Two main approaches are used for shard selection today.
First, in some cases,
redundancy is used only for load-balancing and not for mitigating result misses~\cite{WebSearchScalability2011,Earlybird2012},
yielding an approach we call ``no redundancy'', denoted \emph{\NORED{}}.
\NORED{} selects all $tr$ shards from a single partition without replicas
($tr \le n$),
and the broker directs different queries to different index partitions.
Figure~\ref{fig:ftdis}\subref{subfig:SinglePartition} illustrates \NORED{}.

In other cases, a ``full redundancy'', denoted \emph{\RFULLRED{}},
approach is used~\cite{GomezPantojaMCB11,Graceful2011,WebSearchScalability2011,Google2003,Junqueira2012,ElasticSearchRepLevel,SolrRepLevel,Dean13,JalapartiSIGCOMM13,Hotnets12}. 
Given a query, the broker selects $t$ out of $n$ shards of the original partition,
and replicates its selection by contacting all $r$ replicas of each selected shard.
Figure~\ref{fig:ftdis}\subref{subfig:RepBase} illustrates \RFULLRED{}.
This approach arises when shard selection and redundancy are two separate abstraction layers,
and the search algorithm uses replicated storage as a black box.

\begin{figure}[hbt]
	\centering
    \subfloat[\NORED{}]{\label{subfig:SinglePartition}\includegraphics[scale=0.25]{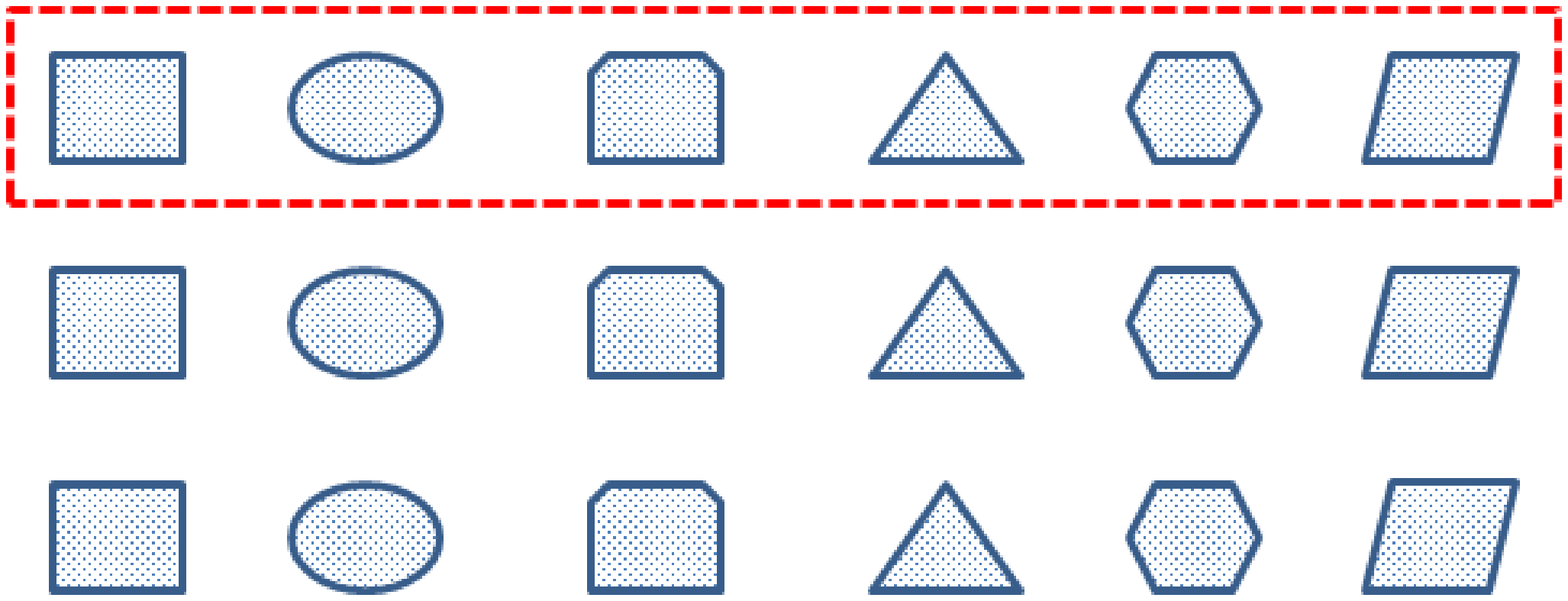}}\\\vspace{1em}
    \subfloat[\RFULLRED{}]{\label{subfig:RepBase}\includegraphics[scale=0.25]{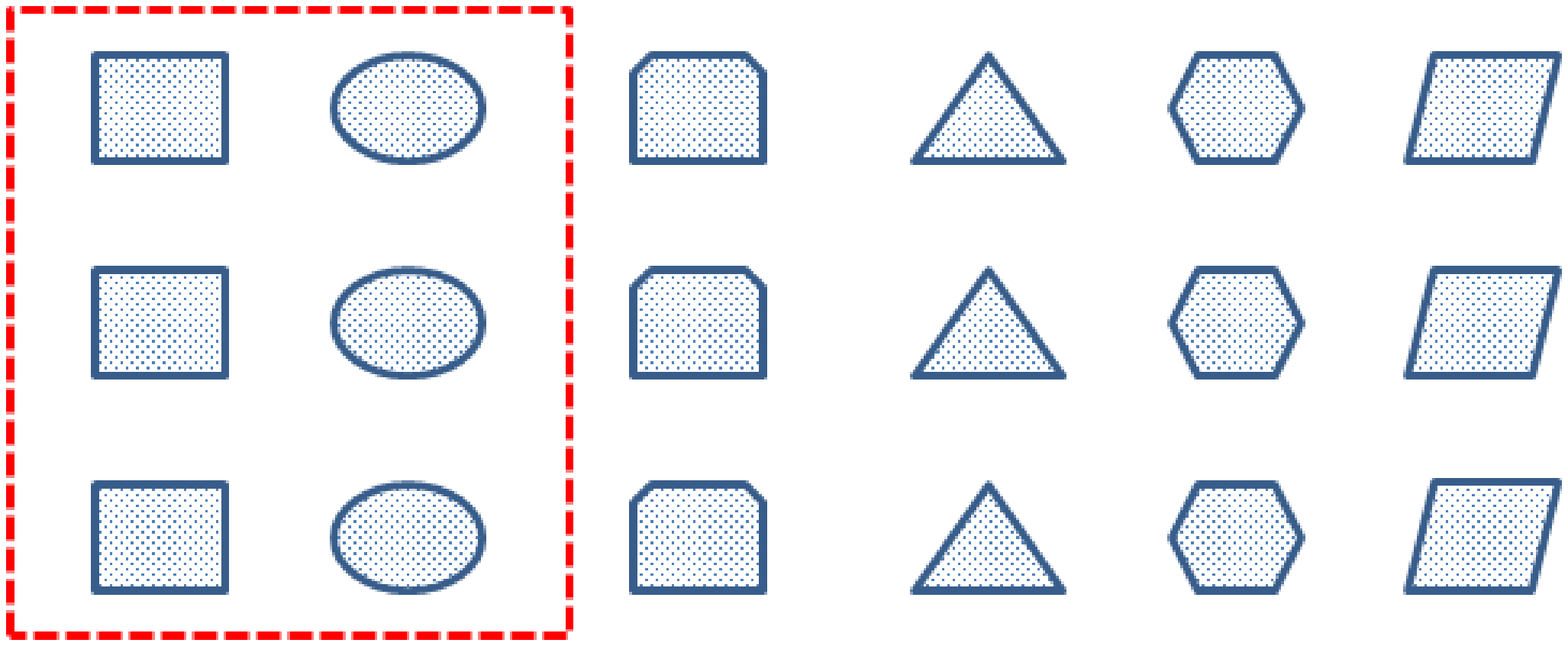}}\\\vspace{1em}
	\subfloat[\RSMARTRED{}]{\label{subfig:RepOpt}\includegraphics[scale=0.25]{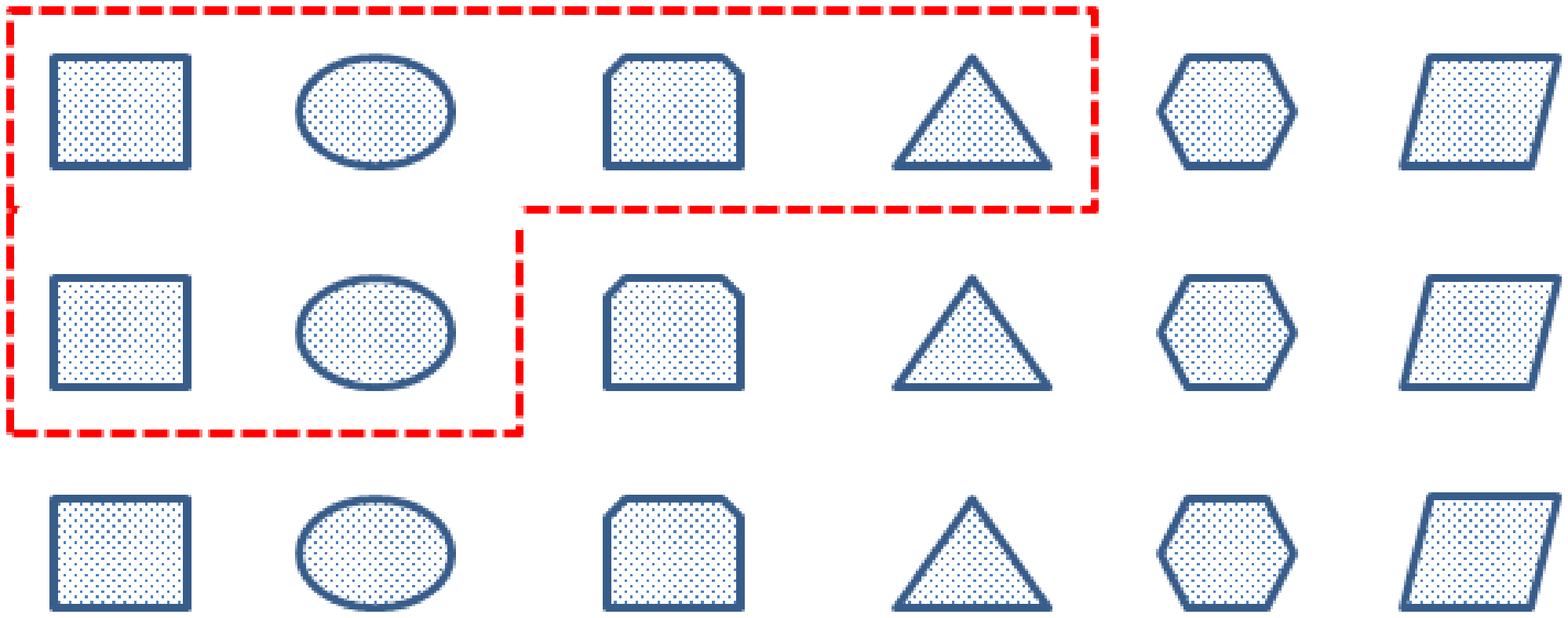}}   
    \vspace{2em}
     \caption{Illustration of the three shard selection methods under Replication.
              A row represents a partition of the collection into $6$ shards.
              At runtime, a total of $6$ shards (dashed lines) are selected.
              \NORED{} selects shards from a single partition without replicas; 
              \RFULLRED{} contacts all $3$ replicas of each shard it selects; 
              \RSMARTRED{} adjusts the number of selected shard replicas such that search quality is maximized.}
	\label{fig:ftdis}
\end{figure}

\subsubsection{Optimal shard selection}
\label{sec:optRep}
We next open up the black box and integrate shard selection and redundancy.
We present an optimal approach for replicated shard selection, \RSMARTRED{},
which maximizes the probability to find relevant documents.
Our method considers both the miss probability and the success probability distribution when selecting shard replicas.

To give an intuition why existing approaches are not optimal, 
consider the following example:
the dataset is partitioned into $5$ shards, 
each shard has two replicas ($r=2$),
and the broker selects $tr=2$ shards per query.
For some query $q$, $D_1$'s success probability is $0.8$,
and $D_2$'s success probability is $0.1$. 
The success probability of the rest of the shards is smaller.
Clearly, $D_1$ should be selected at least once.
There are two alternatives for selecting the second shard:
$D_1$'s replica or $D_2$. 
If $D_1$'s two replicas are selected,
a relevant document $d_q$ is found if it is stored in $D_1$,
and at least one of $D_1$'s replicas does not fail to respond,
which happens with probability $0.8(1-f^2)$.
If $D_1$ and $D_2$ are selected,
$d_q$ is found if it is stored in either $D_1$ or $D_2$,
and the shard that contains it does not fail to respond.
As $D_1$ and $D_2$ are disjoint, this happens with probability $(0.8+0.1)(1-f)$.
Table~\ref{table:selectionIntuition} depicts the success probabilities of the two selection alternatives
for two values of $f$.
As the table demonstrates, the selection that maximizes the success probability depends on the value of $f$.
For $f=0.05$, selecting $D_1$ and $D_2$ is preferable,
whereas for $f=0.2$, selecting the two replicas of $D_1$ is preferable.

\begin{table}[hbt]
\begin{center}
\begin{tabular}{|c|c|c|}
\hline
       		  & Two replicas of $D_1$    & $D_1$ and $D_2$ \\ \hline 
$f=0.05$      & $0.8$                    & $0.85$           \\ 
$f=0.2$       & $0.77$                   & $0.72$           \\ \hline
\end{tabular}\vspace{2em}
\caption{Success probability of different shard selections (columns) for different miss probabilities (rows)
	    when selecting a total of $tr=2$ shards under Replication.
        When $f=0.05$, it is preferable to select the top two shards from the same partition,
        whereas when $f=0.2$, it is preferable to select two replicas of the highest ranked shard.
        }
\label{table:selectionIntuition}
\end{center}
\end{table}

\COMMENT{
\begin{figure}[hbt]
    \centering
    \includegraphics[scale=0.5, clip]{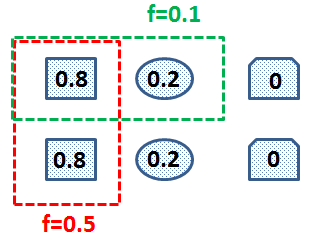}
    \caption{Alternatives for selecting shards under Replication,
              when selecting a total of $tr=2$ shards.
              As shown in Table~\ref{table:selectionIntuition},
              when $f=0.1$, it is preferable to select the top two shards from the same partition,
              whereas when $f=0.5$, it is preferable to select two replicas of the highest ranked shard.}
						\label{fig:fwpAndFailures}
\end{figure}
}

Our \RSMARTRED{} algorithm considers 
$f$ and an estimated distribution of the shard success probabilities.
Given a query $q$, we denote by $p_q(j)$ the estimated success probability of shard $D_j$.
Given $r$ replicas of the partition,
we assign a score of $f^{i-1} p_q(j)$ to the $i$th replica of shard $D_j$
as depicted in Table~\ref{table:SmartRedScores}. 
We then select the $tr$ shard replicas with the highest scores.
Figure~\ref{fig:ftdis}\subref{subfig:RepOpt} shows an example selection of
\RSMARTRED{}.
In Section~\ref{sec:analysis},
we prove that if the estimations are accurate, then 
\RSMARTRED{}
maximizes the probability to find relevant documents 
with any given number of selections.
Note that \RSMARTRED{} is more likely to select multiple 
replicas of a shard as the shard's success probability increases and as $f$ increases.

\begin{table}[hbt]
\begin{center}
\begin{tabular}{|c|ccc|}
\hline
       		   & $D_1$     & $\ldots$     & $D_n$  \\ \hline 
Replica $1$   & $p_q(1)$  & $\ldots$     & $p_q(n)$  \\ 
Replica $2$   & $p_q(1)f$  & $\ldots$     & $p_q(n)f$  \\ 
$\vdots$       & $\vdots$  &               & $\vdots$ \\
Replica $r$   & $p_q(1)f^{r-1}$ &          & $p_q(n)f^{r-1}$  \\ \hline
\end{tabular}
\vspace{2em}
\caption{Scores of shard replicas in \RSMARTRED{}.
         \RSMARTRED{} selects the $tr$ shard replicas with the highest scores.}
\label{table:SmartRedScores}
\end{center}
\end{table}

The following observation follows immediately from the algorithm:
\begin{obs}
\label{obs:topPerPartition}
For each $i$, 
the $t_i=|S_i|$ shards that \RSMARTRED{} selects from partition $i$ are the $t_i$ top-scored shards in that partition according to success probability.
\end{obs}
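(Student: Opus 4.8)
The plan is to unwind the definition of the scores in Table~\ref{table:SmartRedScores} and exploit the fact that, within a single copy of the partition, all shards are scaled by the same power of $f$. Concretely, fix a copy $i \in \{1,\dots,r\}$. By construction, the replica of shard $D_j$ residing in copy $i$ receives score $f^{i-1}p_q(j)$. Since $f>0$, the factor $f^{i-1}$ is a strictly positive constant that does not depend on $j$, so multiplying by it is an order-preserving map on $\{p_q(1),\dots,p_q(n)\}$. Hence, restricted to copy $i$, ranking the replicas by \RSMARTRED{}'s score is exactly the same as ranking the shards $D_1,\dots,D_n$ by their success probability $p_q(\cdot)$.

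Next I would recall that \RSMARTRED{} selects the $tr$ globally highest-scored replicas, which is equivalent to fixing a score threshold $\tau$ and taking every replica whose score exceeds $\tau$, together with a fixed tie-breaking choice among the replicas whose score equals $\tau$. Now suppose $D_k$'s replica in copy $i$ is among the selected set $S_i$, and let $D_j$ be any shard with $p_q(j)\ge p_q(k)$. Then, within copy $i$, $D_j$'s replica has score $f^{i-1}p_q(j)\ge f^{i-1}p_q(k)$, so it lies weakly above $D_k$'s replica in the global order and is therefore also selected (breaking ties consistently when the scores coincide). This shows that $S_i$ is ``downward closed'' in the success-probability order on copy $i$: it cannot contain a shard while omitting a strictly better one. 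Since $|S_i| = t_i$, the set $S_i$ is precisely the $t_i$ top-scored shards of copy $i$ by success probability, which is the claim.

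There is essentially no hard step here — the statement is immediate once the common factor $f^{i-1}$ is pulled out — and the only point deserving care is tie-breaking: when several shards in copy $i$ share the same success probability, ``the $t_i$ top shards'' is well-defined only relative to the tie-breaking rule used by \RSMARTRED{}, so the observation should be read with respect to that same rule (equivalently, one may assume the success probabilities within a partition are distinct). With that caveat stated, the argument above goes through verbatim.
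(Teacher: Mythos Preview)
Your argument is correct and is exactly the reasoning the paper has in mind: the paper does not give a separate proof but simply states that the observation ``follows immediately from the algorithm,'' and your unpacking of that---pulling out the common factor $f^{i-1}$ within a copy so that the intra-copy ranking coincides with the success-probability ranking, then noting that a global top-$tr$ cut is downward closed within each copy---is the natural and only way to make that immediacy explicit. Your remark on tie-breaking is a fair caveat that the paper leaves implicit.
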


\subsection{Repartition}
\label{sec:perturbation}
We propose Repartition, a new approach for constructing a redundant index for tail-tolerant DiS.
Like Replication, Repartition constructs $r$ partitions of $\mathcal{D}$,
each consisting of $n$ pairwise disjoint shards,
and each of the $nr$ shards is assigned to a separate node.
However, unlike Replication, 
Repartition does not construct exact copies of the partition.
Instead, Repartition uses a randomized partitioning scheme and applies it
$r$ times independently to construct $r$ independent partitions.
One may use LSH~\cite{LSH_Indyk1998,Gionis99} for implementing Repartition
by randomly and independently selecting $r$ hash functions.
We propose two shard selection schemes for Repartition: \emph{\PTOP{}} and \emph{\PSMARTRED{}}. 
Note that \NORED{} is trivially applicable to Repartition as well.

\paragraph*{\PTOP{}}
As partitions are independent,
a natural shard selection scheme for Replication selects the $t$ top-scored shards from each 
partition independently, 
where a shard's score is its estimated success probability.
We call this selection scheme \PTOP{}.
Note that like \RFULLRED{}, \PTOP{} selects the same number of shards ($t$) from each partition.

\paragraph*{\PSMARTRED{}}
Our second selection, \PSMARTRED{}, imitates \RSMARTRED{}, 
and  works as follows:
\PSMARTRED{} first arbitrarily selects one of the partitions of $\mathcal{D}$, 
and computes \RSMARTRED{}'s shard selection over $r$ replicas of $\mathcal{D}$.
Recall that \RSMARTRED{} selects $t_i$ shards from each partition replica $i$.
\PSMARTRED{} then selects the $t_i$ top-scored shards from each partition $i$ of the re-partitioned index 
according to the success probability distribution of the shards in that partition.
Therefore, \PSMARTRED{} preserves the number of shards that \RSMARTRED{} selects from each partition.
For example, 
\PSMARTRED{} applies \RSMARTRED{}'s selection that is illustrated in Figure~\ref{fig:ftdis}\subref{subfig:RepOpt},
by selecting the four top-scored shards from one partition 
and the two top-scored shards from the second.

As we show both analytically and in our empirical study, Repartition improves over Replication.
On the other hand, creating and maintaining the index are
more costly with Repartition.
Another limitation of Repartition
is that it is not applicable when the partitioning is given by a third party and cannot be altered.

\COMMENT {
LSH is a randomized algorithm widely used for 
partitioning documents by similarity.
Given document collection $\mathcal{D}$ and a similarity function 
between documents 
$sim: \mathcal{D} \times \mathcal{D} \rightarrow [0,1]$~\cite{Chierichetti12},
LSH is a distribution over a family of hash functions $\mathcal{G}: \mathcal{D} \rightarrow \left\{D_1, \ldots, D_n\right\}$
that maps documents into shards,
where the probability that a randomly sampled 
$g \in \mathcal{G}$ maps two documents to the same shard
grows with their similarity~\cite{Charikar02,Chierichetti12}.
More formally,
given two documents $d_1,d_2 \in \mathcal{D}$,
under a random selection of $g \in \mathcal{G}$,
\begin{equation}
Pr_{g \in \mathcal{G}}[g(d_1)=g(d_2)]= (sim(d_1,d_2))^k,
\end{equation}
where $k$ is a configuration parameter than controls $\mathcal{G}$'s precision.

In order to partition $\mathcal{D}$ into shards of similar documents,
LSH randomly selects a hash function $g \in \mathcal{G}$,
and maps each document $d \in \mathcal{D}$ into its corresponding shard $g(d)$.
Note that LSH is a natural choice for distributed partitioning,
since parallelizing it
is straightforward because hashing a document does not require any information about other documents.
We construct $r$ independent partitions of $\mathcal{D}$ using
$r$ randomly and independently selected hash function $g_i \in \mathcal{G}$.
}

\section{Analysis}
\label{sec:analysis}
\begin{sloppypar}
In this section, we analytically study the success probability~\cite{Multi-probeLSH} 
to retrieve a document relevant to a query when searching a tail-tolerant distributed index.
We provide closed-form analysis of the success probability under Replication,
and prove that \RSMARTRED{} is the optimal selection.
We also prove that
for any shard selection scheme for Replication, 
there exists a shard selection scheme for Repartition with a larger or equal success probability.

\subsection{Success probability formulation}
\label{sec:spFormulation}
Consider a tail-tolerant DiS algorithm $A(r,t)$ with redundancy $r$ and $tr$ shards selected per query.
Consider a query $q$.
Although multiple documents may be relevant to $q$,
for the sake of the analysis, 
we consider exactly one document $d_q \in \mathcal{D}$ that is relevant to $q$.
For query $q$,
the shard selection scheme induces a probability distribution  
$p_q: \left\{1,\ldots,n\right\} \rightarrow [0,1]$,
where $p_q(j)$ is the probability that $d_q$ is stored in shard $D_j$
(in practice, shard selection schemes such as ReDDE approximate this distribution).
Since $\left\{D_1, \ldots, D_n\right\}$ is a partition, 
$\sum_{j=1}^n p_q(j) = 1$.
We denote by $SP(q,f,A(r,t))$ the probability that $A(r,t)$ 
finds $d_q$ when processing query $q$ under miss probability $f$. 
$SP$ is called $A(r,t)$'s success probability.
Henceforth we fix a query $q$ and remove $q$ from our notations.

\subsection{Replication}
\label{sec:replication}
Consider a replicated DiS algorithm $A_R(r,t)$.
We denote by $S_i \subseteq \left\{D_1, \ldots, D_n\right\}$ the set of shards for which $A_R(r,t)$ selects at least $i \geq 1$ replicas.
For example,
if three replicas of $D_7$ are selected,
then $D_7 \in S_1, S_2, S_3$.
Note that $\sum_{i=1}^r \left|S_i\right| = tr$. 
In addition,
\begin{equation}
\label{eq:contain}
S_r \subseteq S_{r-1} \subseteq \ldots \subset S_1.
\end{equation}
We denote by $SP(f,S_i)$ the probability that $d_q$ is found when 
accessing the shards in $S_i$.
This occurs if $d_q$ is stored in one of the shards in $S_i$ and 
that shard
does not fail to respond.
Since shards in $S_i$ are disjoint:
\begin{equation}
\label{eq:probRow}
SP(f,S_i) = (1-f) \sum_{D_j \in S_i} p(j). 	
\end{equation}

By definition, $A_R$'s success probability equals $SP(f,\bigcup_{i=1}^r S_i)$.
The following lemma formulates $SP(f,\bigcup_{i=1}^r S_i)$:
\begin{lemma}
\label{lemma:repRrows}
\[
SP(f,\bigcup_{i=1}^r S_i) =  (1-f) \left(\sum_{D_j \in S_1} p(j) + \ldots \sum_{D_j \in S_r} f^{r-1}p(j)\right).
\]
\end{lemma}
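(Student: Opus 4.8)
The plan is to condition on which shard stores $d_q$ and then decompose the event that this shard answers the query according to the first of its selected replicas that responds on time.

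First I would record a bookkeeping fact about the sets $S_i$. By the nesting property~(\ref{eq:contain}), for each shard $D_j$ the set of replica indices $i$ with $D_j \in S_i$ is exactly $\{1,\dots,k_j\}$ for some integer $0 \le k_j \le r$; moreover $k_j$ is precisely the number of replicas of $D_j$ that $A_R(r,t)$ selects, where $k_j = 0$ means no replica of $D_j$ is selected (so $D_j \notin S_1$). Since $\{D_1,\dots,D_n\}$ is a partition, $d_q$ is stored in a unique shard $D_j$, which happens with probability $p(j)$, and these events are mutually exclusive and exhaustive over $j$. Hence, by the law of total probability, $SP(f,\bigcup_{i=1}^r S_i) = \sum_{j=1}^n p(j)\,\Pr[d_q \text{ is found}\mid d_q \in D_j]$.

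Next I would evaluate $\Pr[d_q \text{ is found}\mid d_q \in D_j]$. Conditioned on $d_q \in D_j$, the document is found iff at least one of the $k_j$ selected replicas of $D_j$ responds before the timeout. Ordering those replicas $1,\dots,k_j$, the event ``at least one responds'' is the disjoint union over $i \in \{1,\dots,k_j\}$ of the events ``replica $i$ responds and replicas $1,\dots,i-1$ all miss''. Because misses are independent across replicas with common probability $f$, this probability equals $\sum_{i=1}^{k_j}(1-f)f^{i-1}$, which is an empty sum (equal to $0$) when $k_j = 0$, consistent with $d_q$ being unreachable in that case. Substituting back gives $SP(f,\bigcup_{i=1}^r S_i) = (1-f)\sum_{j=1}^n p(j)\sum_{i=1}^{k_j} f^{i-1}$.

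Finally I would exchange the order of summation and use the bookkeeping fact again, namely $k_j \ge i \iff D_j \in S_i$: $\sum_{j} p(j)\sum_{i=1}^{k_j} f^{i-1} = \sum_{i=1}^{r} f^{i-1}\sum_{j:\,k_j \ge i} p(j) = \sum_{i=1}^{r} f^{i-1}\sum_{D_j \in S_i} p(j)$, which is exactly the expression claimed in the lemma. The only delicate point is the translation between ``$D_j \in S_i$'' and ``$k_j$ replicas of $D_j$ are selected''; this is where~(\ref{eq:contain}) is essential, and everything else is routine. In particular no geometric-series identity is needed, since decomposing over the first responding replica already produces the $(1-f)f^{i-1}$ weights directly; alternatively one could prove the lemma by induction on $r$, relating $SP(f,\bigcup_{i=1}^r S_i)$ to $SP(f,\bigcup_{i=1}^{r-1} S_i)$ plus the marginal contribution of the extra replicas in $S_r$, but the direct conditioning argument is cleaner.
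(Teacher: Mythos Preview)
Your proof is correct but proceeds differently from the paper. The paper argues by induction on $r$: it writes $SP(f,\bigcup_{i=1}^r S_i)$ via the union formula $\Pr(A\cup B)=\Pr(A)+\Pr(B)-\Pr(B)\Pr(A\mid B)$ with $A=\bigcup_{i=1}^{r-1}S_i$ and $B=S_r$, computes the conditional probability $\Pr(A\mid B)=1-f^{r-1}$ using the containment~(\ref{eq:contain}), and then invokes the induction hypothesis together with Equation~(\ref{eq:probRow}). You instead condition directly on the shard $D_j$ containing $d_q$, decompose the event ``some selected replica responds'' over the index of the \emph{first} responding replica to obtain the weights $(1-f)f^{i-1}$, and then swap the order of summation using the equivalence $k_j\ge i \iff D_j\in S_i$. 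Your route is arguably cleaner and more transparent about where each term $f^{i-1}p(j)$ comes from; the paper's inductive scaffolding, on the other hand, mirrors the structure used later in the proof of Theorem~\ref{thm:repartition}, where a similar peel-off-the-last-partition argument is needed and no analogous direct decomposition is available.
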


\begin{proof}
We prove by induction on $r$.
\paragraph*{Base ($r=1$):}
Follows directly
from Equation~\ref{eq:probRow}.

\paragraph*{Step:}
we assume for $r-1$: 
\begin{equation}
\label{eq:inducAssum}
SP(f, \bigcup_{i=1}^{r-1} S_i) = (1-f) \left(\sum_{D_j \in S_1} p(j) + \ldots + \sum_{D_j \in S_{r-1}} f^{r-2}p(j)\right).
\end{equation}

As $\bigcup_{i=1}^r S_i = (\bigcup_{i=1}^{r-1} S_i) \cup S_r$,
then according to the probability of a union of events\footnote{$Pr(A \cup B) = Pr(A) + Pr(B) - Pr(B)Pr(A|B).$}:
\[
SP(f, \bigcup_{i=1}^r S_i)= SP(f,\bigcup_{i=1}^{r-1} S_i) + SP(f,S_r) 
\]
\begin{equation}
- SP(f, S_r) SP((f,\bigcup_{i=1}^{r-1} S_i) | (f, S_r)),
\label{eq:unionEvents}
\end{equation}
where $SP((f,\bigcup_{i=1}^{r-1} S_i) | (f, S_r))$ denotes the conditional probability to find $d_q$ when searching
$\bigcup_{i=1}^{r-1} S_i$, 
given that $d_q$ is found when searching $S_r$. 
Let $D_j \in S_r$ be
the shard replica that contains $d_q$.
Due to containment (Equation \eqref{eq:contain}),
$D_j \in S_i$, $1 \le i \le r-1$.
Hence, $d_q$ is found if at least one of those $r-1$ shards does not fail to respond,
which happens with probability $1-f^{r-1}$.
Thus, 
\begin{equation}
\label{eq:cond}
SP((f,\bigcup_{i=1}^{r-1} S_i) | (f, S_r)) = 1-f^{r-1}.
\end{equation}

By Equations~\ref{eq:unionEvents} and ~\ref{eq:cond}:
\[
SP(f,\bigcup_{i=1}^r S_i) = SP(f,\bigcup_{i=1}^{r-1} S_i) + SP(f, S_r) 
\]
\begin{equation}
- SP(f, S_r) (1-f^{r-1}).
\label{eq:unionEventsRep}
\end{equation}

And by Equations~\ref{eq:probRow},~\ref{eq:inducAssum}:
\[
SP(f,\bigcup_{i=1}^r S_i)= (1-f) (\sum_{D_j \in S_1} p(j) + \ldots + \sum_{D_j \in S_{r-1}} f^{r-2}p(j))
\]
\[
+ (1-f)\sum_{D_j \in S_r} p(j) - (1-f^{r-1})(1-f)\sum_{D_j \in S_r} p(j)
\]
\[
= (1-f)\left(\sum_{D_j \in S_1} p(j) + \ldots + \sum_{D_j \in S_r} f^{r-1}p(j)\right).
\]
\end{proof}

\subsubsection{Optimal selection}
\begin{thm}
\label{thm:rsmartred}
For a given $1 \leq tr \leq nr$,
\RSMARTRED{} selects $tr$ shards such that $SP$ is maximized.
\end{thm}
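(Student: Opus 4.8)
The plan is to use the closed form of Lemma~\ref{lemma:repRrows} to turn the optimization into a transparent ``pick the $tr$ highest-weight items'' problem, and then to observe that \RSMARTRED{}'s scoring is exactly the one under which the greedy choice is simultaneously feasible and optimal.

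First I would rewrite the objective. A replicated selection is described by nested sets $S_r \subseteq \cdots \subseteq S_1 \subseteq \{D_1,\ldots,D_n\}$ with $\sum_{i=1}^r |S_i| = tr$; by Lemma~\ref{lemma:repRrows} it achieves
\[
SP\Bigl(f,\bigcup_{i=1}^r S_i\Bigr) = (1-f)\sum_{i=1}^r f^{\,i-1}\!\!\sum_{D_j \in S_i}\!\! p(j) = (1-f)\!\!\sum_{(i,j):\,D_j \in S_i}\!\! f^{\,i-1} p(j).
\]
The value $f^{\,i-1}p(j)$ is precisely the score \RSMARTRED{} assigns to the $i$th replica of $D_j$ in Table~\ref{table:SmartRedScores}. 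Identifying a selection with the set of replica-slots $(i,j)$ it contains — nested sets $S_r\subseteq\cdots\subseteq S_1$ correspond bijectively to size-$tr$ slot sets that are \emph{downward closed} per shard (if $(i,j)$ is chosen then so is $(i-1,j)$, which is the containment \eqref{eq:contain}) — $SP$ is $(1-f)$ times the total \RSMARTRED{}-score of the selected slots. Hence maximizing $SP$ is equivalent to choosing $tr$ slots of maximum total score among the downward-closed slot sets.

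Next I would argue that the unconstrained maximizer is already feasible. For a fixed shard the scores $f^{\,i-1}p(j)$ are strictly decreasing in $i$ whenever $0<f<1$; the boundary cases are trivial ($SP\equiv 0$ when $f=1$, and when $f=0$ only first-replica slots carry positive score, so the choice among the remaining zero-score slots is immaterial). An exchange argument then applies: if some size-$tr$ slot set contains $(i,j)$ but not $(i-1,j)$, swapping $(i,j)$ for $(i-1,j)$ preserves the cardinality and strictly increases the total score. Therefore every maximum-score slot set is downward closed for every shard, i.e.\ feasible; the per-shard cap $r$ is automatically respected (there are only $r$ slots per shard) and $tr\le nr$ guarantees $tr$ slots exist. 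By definition \RSMARTRED{} picks the $tr$ highest-scored slots, so it attains the unconstrained maximum of the total score; being feasible, it is thus the constrained maximum, and after multiplying by the constant $(1-f)$ it maximizes $SP$ among all replicated selections with $tr$ selections, which is the claim. If several slots tie at the cut-off score, any tie-break gives the same total score and hence the same $SP$, and breaking ties toward the smaller replica index keeps the set downward closed, so a feasible optimum is always realized.

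The step I expect to require the most care is the exchange argument underpinning feasibility: one must verify that the score is strictly monotone in the replica index (so that no swap ever loses value) and dispose of the $f\in\{0,1\}$ corner cases separately. Everything else is bookkeeping on top of Lemma~\ref{lemma:repRrows}.
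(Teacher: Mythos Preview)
Your proof is correct and takes essentially the same approach as the paper's: both invoke Lemma~\ref{lemma:repRrows} to write $SP$ as $(1-f)$ times a sum of slot scores $f^{i-1}p(j)$ and then pick the $tr$ largest. The paper's own argument is a two-line sketch that omits the feasibility (nesting) check your exchange argument supplies, so your version is, if anything, more rigorous than the original.
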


\begin{proof}
According to Lemma \ref{lemma:repRrows}, $SP$ is maximized when
$\sum_{D_j \in S_1} p(j) + \ldots + \sum_{D_j \in S_r} f^{r-1}p(j)$ is maximized.
Selecting the $tr$ shards with the largest 
score values, $f^{i-1}p(j)$, maximizes the sum hence the success probability.
\end{proof}

\RSMARTRED{}'s selection depends on the miss probability and the shard success probability distribution
as formulated in Lemma \ref{lemma:repRrows}.
When $f$ is high and the distribution is skewed, i.e., few shards have a high success probability,
the optimal selection is likely to select those shards' replicas and tends towards the \RFULLRED{} method.
When the success probability distribution is close to uniform, or when $f$ is low,
the optimal selection is more likely to select additional shards of the partition,
hence it tends towards the NoRed method.
Note that $f^{i-1}p_{j}$ exponentially decreases as $i$ increases,
hence the effectiveness of selecting additional replicas of a shard decreases 
as more of its replicas are selected.

\subsection{Repartition}
\label{sec:analysis_repartition}

\begin{thm}
\label{thm:repartition}
Consider a Repartition in which all partitions have the same probability distribution
for a given query $q$.
Then, for every shard selection used with Replication of one of these partitions, 
there exists a shard selection for Repartition with a larger or equal success probability for $q$.
\end{thm}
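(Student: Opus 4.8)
The plan is to take an arbitrary shard selection for Replication of one fixed partition, compute its success probability using the closed form of Lemma~\ref{lemma:repRrows}, and then exhibit a concrete shard selection for Repartition that achieves at least that value. The natural candidate is \PSMARTRED{}-style mimicry: if the Replication selection takes $t_i = |S_i|$ shards at ``multiplicity level'' $i$ (so it picks $\sum_i t_i = tr$ shard-replicas total from the single partition, with the containment $S_r \subseteq \cdots \subseteq S_1$), then in Repartition I would select, from partition $i$, its $t_i$ top-scored shards according to that partition's success-probability distribution. Since all $r$ partitions share the same distribution $p(\cdot)$ for $q$ by hypothesis, I can order the shards of each partition as $p(1) \ge p(2) \ge \cdots \ge p(n)$ and the quantity $\sum_{D_j \in S_i} p(j)$ for the Replication side is at most $\sum_{j=1}^{t_i} p(j)$, the mass of the $t_i$ heaviest shards — which is exactly what the Repartition selection captures in partition $i$. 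So without loss of generality I may assume the Replication selection already picks top-scored shards at each level (Observation~\ref{obs:topPerPartition} tells us the optimal Replication selection does this anyway), i.e. $S_i$ consists of the $t_i$ heaviest shards, and hence $S_i \subseteq S_{i-1}$ automatically.

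The core of the argument is then an event-coverage / inclusion–exclusion comparison. On the Replication side, by Lemma~\ref{lemma:repRrows},
\[
SP_{\mathrm{Rep}} = (1-f)\left(\sum_{D_j \in S_1} p(j) + f\sum_{D_j \in S_2} p(j) + \cdots + f^{r-1}\sum_{D_j \in S_r} p(j)\right).
\]
On the Repartition side, let $T_i$ be the set of the $t_i$ heaviest shards in partition $i$; because all partitions have distribution $p$, $\sum_{D_j \in T_i} p(j) = \sum_{D_j \in S_i} p(j)$, call this quantity $\sigma_i$. I would compute the success probability of the Repartition selection by conditioning on which partitions ``cover'' $d_q$: in partition $i$, the document lands in the selected region $T_i$ with probability $\sigma_i$, and these events are independent across $i$ (independent hash functions), but a partition only helps if, in addition, the specific node holding $d_q$ in that partition responds on time. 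The key point is that within a single partition the contributions of its shards to "finding $d_q$" are mutually exclusive, but across partitions the copies of $d_q$ sit in potentially different shards whose miss events are independent — which is strictly better than Replication, where all $r$ copies reside in the same shard position and... well, actually in Replication the $r$ replicas are distinct nodes too, so the miss events are also independent; the real gain is that Repartition can place $d_q$ in its heavy region of partition $2$ even when $d_q$ is not in the heavy region of partition $1$, whereas Replication's $S_2$-coverage is a subset of its $S_1$-coverage. I would make this precise by writing the Repartition success probability as $1 - \prod_{i=1}^{r}\big(1 - (1-f)\sigma_i\big)$-type expression — more carefully, $d_q$ is found iff at least one partition $i$ both selects the shard containing $d_q$ (prob.\ $\sigma_i$) and that shard responds (prob.\ $1-f$), and these joint events are independent across $i$ — so
\[
SP_{\mathrm{Repart}} = 1 - \prod_{i=1}^{r}\bigl(1 - (1-f)\sigma_i\bigr),
\]
and then expand the product and compare term by term with $SP_{\mathrm{Rep}} = (1-f)\sum_{i=1}^{r} f^{i-1}\sigma_i$.

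The main obstacle is the final inequality $1 - \prod_{i=1}^r (1-(1-f)\sigma_i) \ge (1-f)\sum_{i=1}^r f^{i-1}\sigma_i$. To handle it I would exploit the containment $\sigma_1 \ge \sigma_2 \ge \cdots \ge \sigma_r$ (which follows from $S_r \subseteq \cdots \subseteq S_1$, i.e. $t_r \le \cdots \le t_1$) together with $\sum_i \sigma_i \le 1$ (since the $\sigma_i$ are partial sums of a probability distribution and $t_i \le n$; in fact $\sigma_i \le 1$ individually suffices for several steps). One clean route: show the left side is at least $(1-f)\sigma_1 + (1-(1-f)\sigma_1)(1-f)\sigma_2 + \cdots$ (a telescoping lower bound for the complement of a product), i.e. $SP_{\mathrm{Repart}} \ge (1-f)\sum_i \sigma_i \prod_{k<i}(1-(1-f)\sigma_k)$, and then argue $\prod_{k<i}(1-(1-f)\sigma_k) \ge f^{i-1}$ — this last step is where the work concentrates and where I expect to need either $\sigma_k \le 1$ for each $k<i$ (giving $1-(1-f)\sigma_k \ge 1-(1-f) = f$, hence the product is $\ge f^{i-1}$, done) — so actually the bound $\sigma_k \le 1$ closes it immediately. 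I would double-check the edge cases $r=1$ (equality, reduces to Equation~\ref{eq:probRow}) and the case where some $t_i$ forces $\sigma_i$ near $1$, and also verify the reduction step that lets me assume the Replication selection is top-scored per level is truly without loss of generality (it only decreases $SP_{\mathrm{Rep}}$ to assume otherwise, so it is safe). Writing out the telescoping bound for $1 - \prod(1-x_i)$ carefully, and confirming the per-level mass equality $\sum_{D_j\in T_i}p(j) = \sum_{D_j\in S_i}p(j)$ uses the equal-distribution hypothesis, are the two places I would be most careful.
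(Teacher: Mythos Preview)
Your argument is correct and takes a different route from the paper's. The paper proceeds by induction on $r$: it writes $SP_P^{(r)} = SP_P^{(r-1)} + SP_P(f,S'_r) - SP_P(f,S'_r)\cdot SP_P^{(r-1)}$ using independence of partitions, lower-bounds the first occurrence of $SP_P^{(r-1)}$ by $SP_R^{(r-1)}$ via the induction hypothesis, and upper-bounds the second occurrence by $1-f^{r-1}$ (since $d_q$ sits in at most $r-1$ selected shards across the first $r-1$ partitions), which matches the cross term of Equation~\ref{eq:unionEventsRep}. You instead write the closed form $SP_{\mathrm{Repart}} = 1-\prod_{i=1}^r\bigl(1-(1-f)\sigma_i\bigr)$ directly from independence, expand via the telescoping identity $1-\prod_i(1-x_i)=\sum_i x_i\prod_{k<i}(1-x_k)$, and bound each factor $1-(1-f)\sigma_k\ge f$ using only $\sigma_k\le 1$. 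The two arguments rest on the same elementary inequality, but yours avoids the induction scaffolding and the somewhat opaque $\eta(1-f^u)$ step in the paper; in return, the paper's version never needs to commit to the full product formula. Two small notes: your telescoping ``lower bound'' is in fact an equality, so you may state it as such; and the claim $\sum_i\sigma_i\le 1$ is false in general (take $r=3$ with all $\sigma_i=0.5$), though as you yourself observe you never use it---$\sigma_i\le 1$ individually is all that is needed. Your WLOG reduction to top-scored $S_i$'s is also handled more explicitly than in the paper, which simply asserts that the given Replication selection picks top-scored shards per level.
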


\begin{proof}
Consider a Repartition consisting of $r$ independent 
partitions of $\mathcal{D}$,
and a Replication algorithm  $A_R(r,t)$
employing $r$ replicas of one of these partitions.
Consider $A_R(r,t)$'s shard selection, $S_1,\ldots,S_r$,
which selects $t_i \triangleq \left|S_i\right|$ top-scored shards from each partition replica $i$.
We construct a Repartition algorithm $A_P(r,t)$ to select the $t_i$ 
top-scored shards for each independent partition $i$, 
according to success probability.
I.e., $A_P(r,t)$ preserves the number of shards that $A_R(r,t)$ selects per each partition.
We denote $A_P(r,t)$'s selection by $S'_1,\ldots,S'_r$.
We prove that 
\begin{equation}
\label{eq:repProws}
SP_P(f,\bigcup_{i=1}^r S'_i) \ge  SP_R(f,\bigcup_{i=1}^r S_i).
\end{equation}
by induction on $r$.

\paragraph*{Base ($r=1$):}
According to Equation~\ref{eq:probRow}, and since the distributions are equal for all partitions:
$
SP_P(f,S'_1) =  SP_R(f,S_1).
$

\paragraph*{Step:}
we assume for $r-1$: 
\begin{equation}
\label{eq:inducAssumP}
SP_P(f,\bigcup_{i=1}^{r-1} S'_i) \ge  SP_R(f,\bigcup_{i=1}^{r-1} S_i).
\end{equation}

We compute for $r$:
as $\bigcup_{i=1}^r S'_i = (\bigcup_{i=1}^{r-1} S'_i) \cup S'_r$,
and as the partitions are independent,
then according to the probability of a union of 
independent events:
\[
SP_P(f,\bigcup_{i=1}^r S'_i)= SP_P(f,\bigcup_{i=1}^{r-1} S'_i) + SP_P(f, S'_r) 
\]
\begin{equation}
\label{eq:spP}
- SP_P(f, S'_r) SP_P(f,\bigcup_{i=1}^{r-1} S'_i).
\end{equation}
$SP_P(f,\bigcup_{i=1}^{r-1} S'_i)$
denotes the probability that we found $d_q$ when searching
$\bigcup_{i=1}^{r-1} S'_i$.
Denote by $\eta$ the probability that $d_q$ is stored in $\bigcup_{i=1}^{r-1} S'_i$.
For each $S'_i$, $1 \le i \le r-1$, $d_q$ is either stored in one of the shards in $S'_i$ or it is not.
Hence, 
$d_q$ is stored in $u$ shards in $\bigcup_{i=1}^{r-1} S'_i$,
where $0 \le u \le r-1$,
and is found if at least one of those shards does not fail to respond
which happens with probability $1-f^u$. 
Thus, 
\begin{equation}
\label{eq:spPrminus1}
SP_P(f,\bigcup_{i=1}^{r-1} S'_i) = \eta (1-f^u).
\end{equation}
Since both $A_R(r,t)$ and $A_P(r,t)$ select $t_r$ top-scored shards from their $r$-th partition,
and since success probabilities are equal for all partitions,
then according to Equation~\ref{eq:probRow},
\begin{equation}
\label{eq:eqR}
SP_P(f,S'_r) =  SP_R(f,S_r).
\end{equation}
Substituting Equations~\ref{eq:spPrminus1} and~\ref{eq:eqR} in
Equation~\ref{eq:spP} we get:
\[
SP_P(f,\bigcup_{i=1}^r S'_i) = SP_P(f,\bigcup_{i=1}^{r-1} S'_i) + SP_R(f, S_r) 
\]
\[
- SP_R(f, S_r) \eta (1-f^u).
\]
According to the induction assumption in Equation~\ref{eq:inducAssumP},
and since $0 \le \eta \le 1$ and $1-f^u \le 1-f^r$,
it follows that 
\[
SP_P(f,\bigcup_{i=1}^r S'_i) \ge SP_R(f,\bigcup_{i=1}^{r-1} S_i) + SP_R(f, S_r) 
\]
\begin{equation}
\label{eq:ge}
- SP_R(f, S_r) (1-f^r) \overset{(\ref{eq:unionEventsRep})}{=} SP_R(f,\bigcup_{i=1}^r S_i).
\end{equation}
\end{proof}

Note that \PSMARTRED{} preserves the number of shards that \RSMARTRED{} selects per each partition,
and thus according to Equation~\ref{eq:repProws}, 
\PSMARTRED{}'s success probability is equal or greater than \RSMARTRED{}'s.
Nevertheless, although \RSMARTRED{} is optimal for Replication, 
this does not imply that \PSMARTRED{} is optimal for Repartition. 
Note further that Theorem \ref{thm:repartition} holds under the assumption that all partitions' probability distributions are the same.
In practice this assumption does not necessarily hold,
but our experiments show that Repartition is advantageous nevertheless.

\end{sloppypar}

\section{Empirical Study}
\label{sec:emp}
We empirically evaluate our tail-tolerant distributed search using two real-world datasets.
We measure search quality using the recall metric and demonstrate the superiority of 
\RSMARTRED{} over \NORED{} and \RFULLRED{}, 
and the improvement that Repartition suggests over Replication.
Our empirical study confirms our analysis.

\subsection{Methodology}
\label{sec:method}

\paragraph*{Datasets}
\begin{itemize}
	\item Reuters RCV1~\cite{Reuters} consists of news articles spanning a year. 
     We represent each news article by its title and first paragraph. 
     We parse the Reuters datasets using conventional methods, 
     including stop-word removal and stemming~\cite{Manning2008}.
    \item Livej~\cite{Yang2012} was crawled from the Livejournal~\cite{Livejournal} 
          free online community by the Stanford SNAP Project~\cite{SNAP}. 
          In Livejournal, the corpus consists of users who join blogs that reflect their 
          topics of interest. We represent a user by a document 
          where the document's terms correspond to the blogs he
          joined, filtering out users with no topics of interest. 
\end{itemize}
Table \ref{table:datasets} summarizes the dataset statistics after pre-processing.

\begin{table}[hbt]
\begin{center}
\begin{footnotesize}
\begin{tabular}{|c|c|c|}
\hline
       		  & Number of        & Number of        \\ 
              & Documents        & Terms         \\ \hline 
Reuters       & $779\!,913$      & $96\!,513$            \\ 
Livej         & $1\!,147\!,948$  & $664\!,414$           \\ \hline
\end{tabular}\vspace{1em}
\end{footnotesize}
\caption{Dataset statistics after pre-processing.}
\label{table:datasets}
\end{center}
\end{table}

\paragraph*{Experiment setup}
We use Lucene~\cite{Lucene} 4.3.0 search library as our indexing and retrieval infrastructure.
We weight document terms according to Lucene's $\mbox{TF-IDF}$ function, 
where $TF(term)$ is the square root of the term's frequency in the document, 
and $IDF(term) = ln(\frac{N_d}{N_{term}+1})+1$, 
where $N_d$ is the total number of documents,
and $N_{term}$ is the number of items containing the term.
We score documents with Lucene's default similarity function, 
which implements 
a variant of the cosine-based retrieval model.

\paragraph*{Tail-tolerant DiS simulator}
We use the Tarsos-LSH Java implementation~\cite{TarsosLSH} 
of cosine-based LSH for the partitioning.
LSH provides a configuration parameter $k$, which controls the number of shards in the partition.
We partition the data into $n=32$ shards by setting $k$ to $5$.
We simulate the DiS on a single machine by maintaining
a separate index for each shard, as well as for the shard index.
We set $r=3$ in all experiments.

We construct the centralized shard index 
by sampling documents 
from every shard with a configured sampling probability.
Given a query, we first search the shard index and retrieve a result set of top $\gamma$ documents;
we set $\gamma=500$ in our experiments (for all queries).
We then score the shards based on the result set according to CRCS Linear~\cite{Shokouhi2007}:
The score $S(D)$ of shard $D$ is defined as $S(D) \triangleq \sum_{d \in R_D} S(d)$,
where $R_D$ is the subset of the results 
sampled from shard $D$,
and $S(d) = \gamma - j$, where $1 \le j \le \gamma$ is the rank of $d$ in the result set.
We normalize CRCS's scores, $\hat{S}(D)\triangleq \frac{S(D)}{\sum_{D'} S(D')}$, in order to produce the shard success probability distribution that CRCS induces.

We simulate distributed query processing as follows:
Given a query, we retrieve the top $100$ results of each shard.
In order to simulate misses of results, 
we drop the results of each shard with probability $f$.
We union the results of the responsive nodes and omit duplicates (duplicates exist due to redundancy),
which yields a result set $R$ of unique documents.
Since all shards apply the same scoring function, 
we rank the documents in $R$ according to their scores and return the top-scored $100$ documents in $R$.

We examine two shard success probability distributions: 
1) a uniform success probability distribution that we produce using the Random shard selection,
and 2) a skewed success probability distribution that we produce using CRCS with sampling probability of $0.4$\footnote{In order to produce a highly skewed success probability using CRCS,
for the purpose of demonstration, 
we use an extremely high sampling probability of $0.4$.}. 
Figure~\ref{fig:emp_spDist} illustrates the 
average estimated success probability of 
the five top-scored shards for the LiveJ and the Reuters datasets
produced as follows:
For each query in the evaluation set, 
we estimate the success probabilities of the query's top five shards.
We then average each of these five success probabilities over all queries.
As the figure illustrates,
the Random shard selection induces a uniform success probability distribution,
which is identical for both datasets (Figure~\ref{fig:emp_spDist}\subref{subfig:lj_uni_sp}).
CRCS induces a skewed success probability distribution (Figures~\ref{fig:emp_spDist}\subref{subfig:lj_skew_sp}
and~\ref{fig:emp_spDist}\subref{subfig:rcv_skew_sp}),
and in particular the most skewed one in LiveJ (Figure~\ref{fig:emp_spDist}\subref{subfig:lj_skew_sp}).

\begin{figure*}[tb]
	\centering     
    \subfloat[Random]{\label{subfig:lj_uni_sp}\includegraphics[scale=0.4]{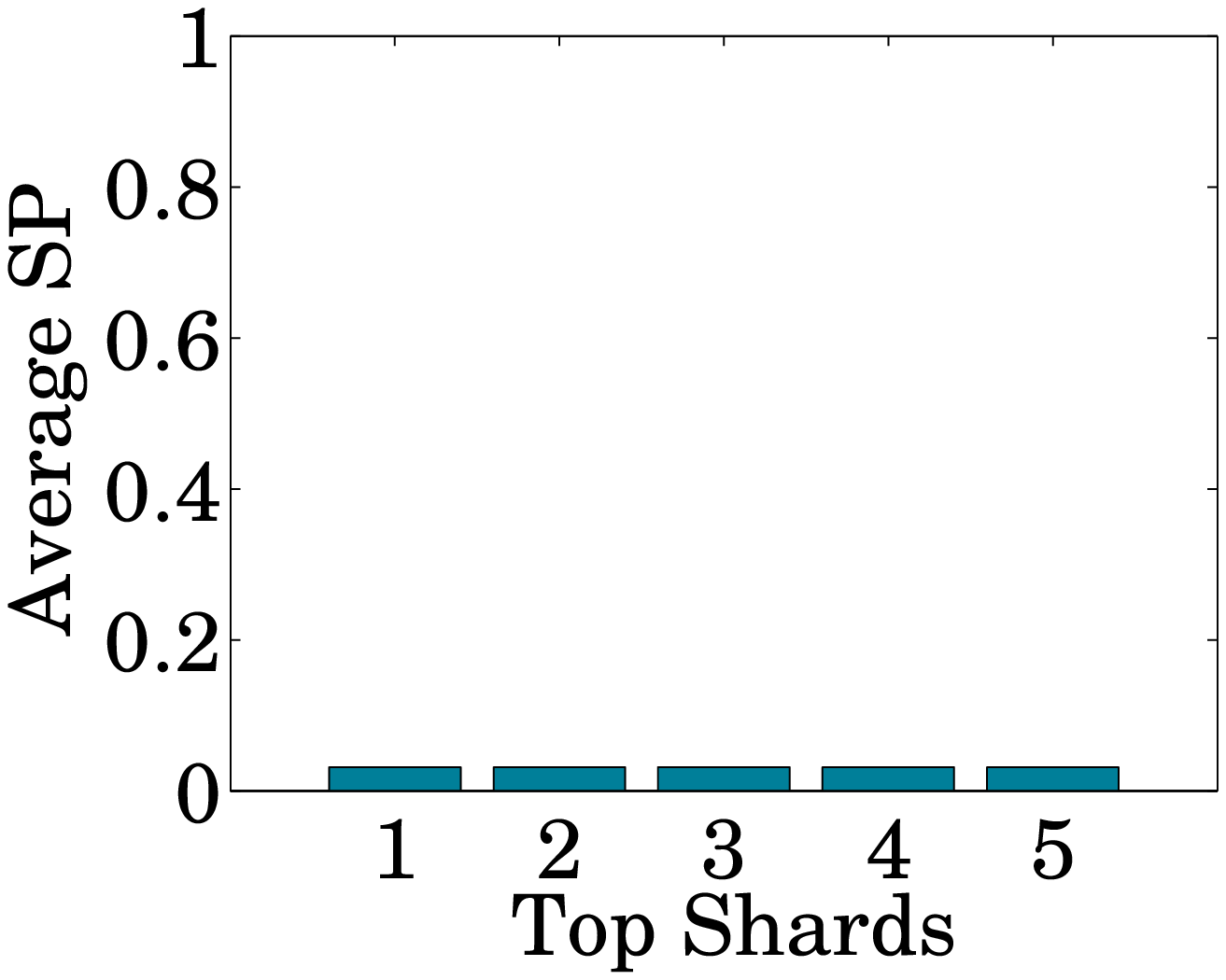}}\hspace{0.5em} 
    \subfloat[LiveJ CRCS]{\label{subfig:lj_skew_sp}\includegraphics[scale=0.4]{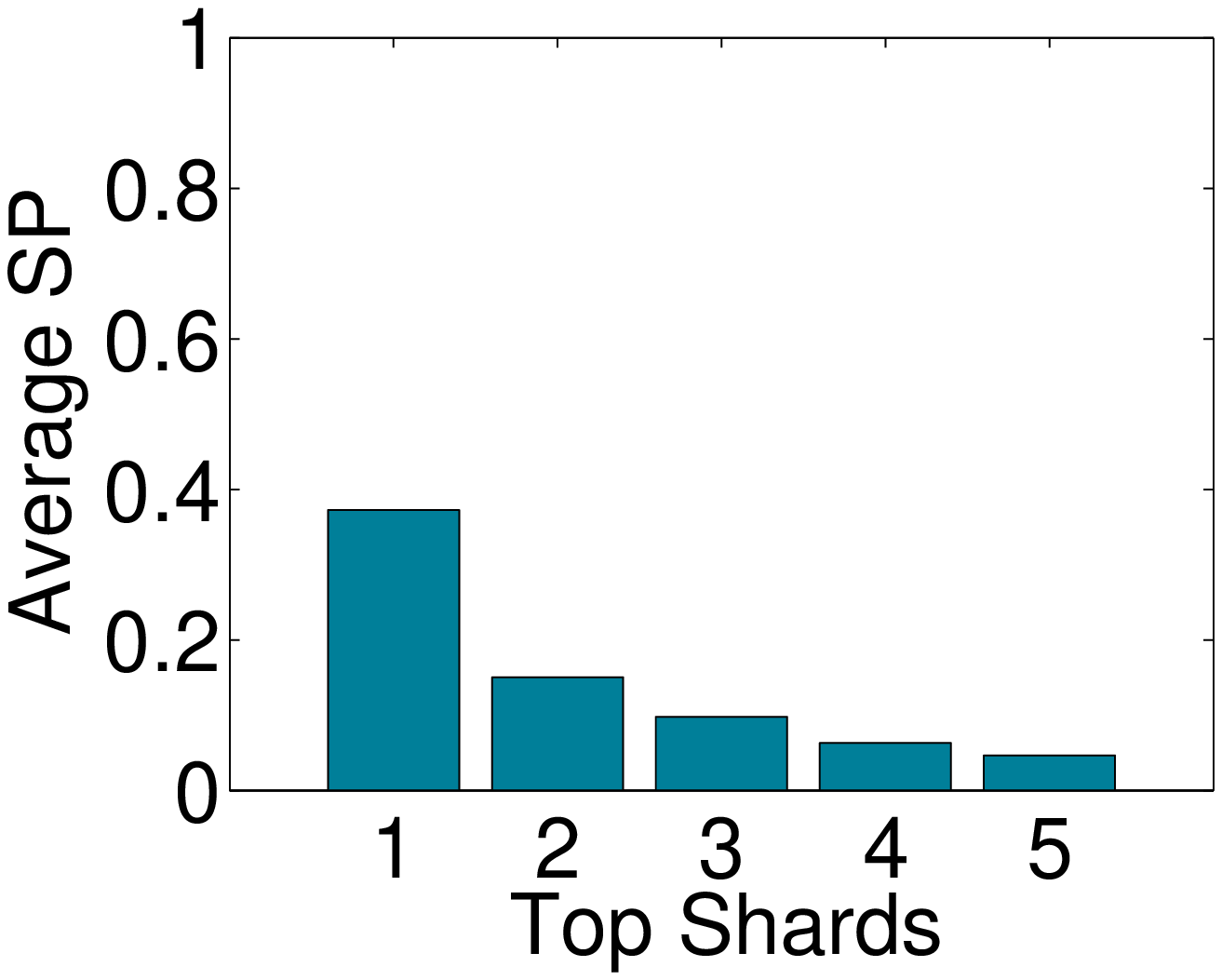}}\hspace{0.5em}      
    \subfloat[Reuters CRCS]{\label{subfig:rcv_skew_sp}\includegraphics[scale=0.4]{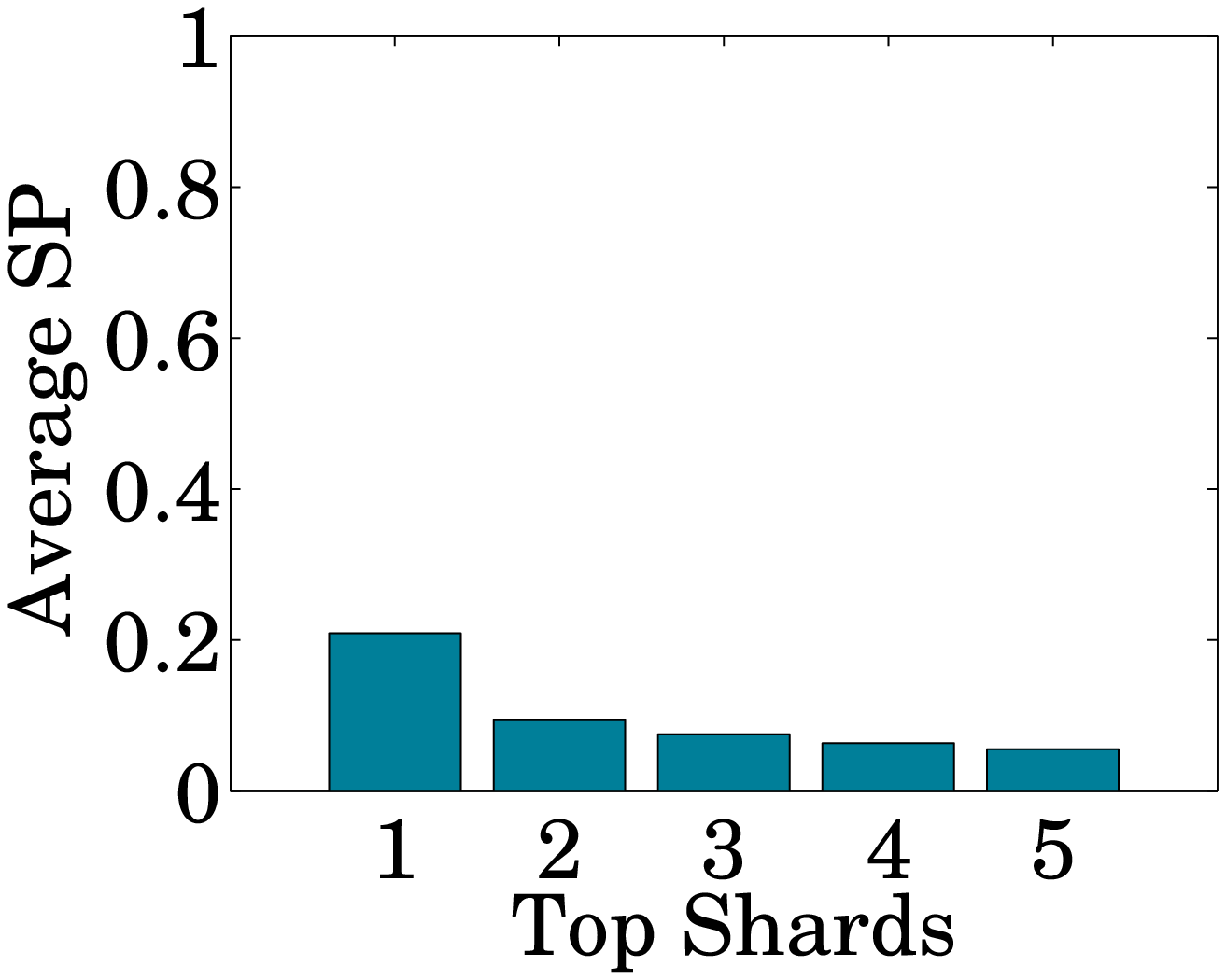}}
    \vspace{1em}
    \caption{Random and CRCS success probability distributions over the Reuters and LiveJ datasets.}    
	\label{fig:emp_spDist}
    
    \subfloat[Random]{\label{subfig:lj_Random}\includegraphics[scale=0.4]{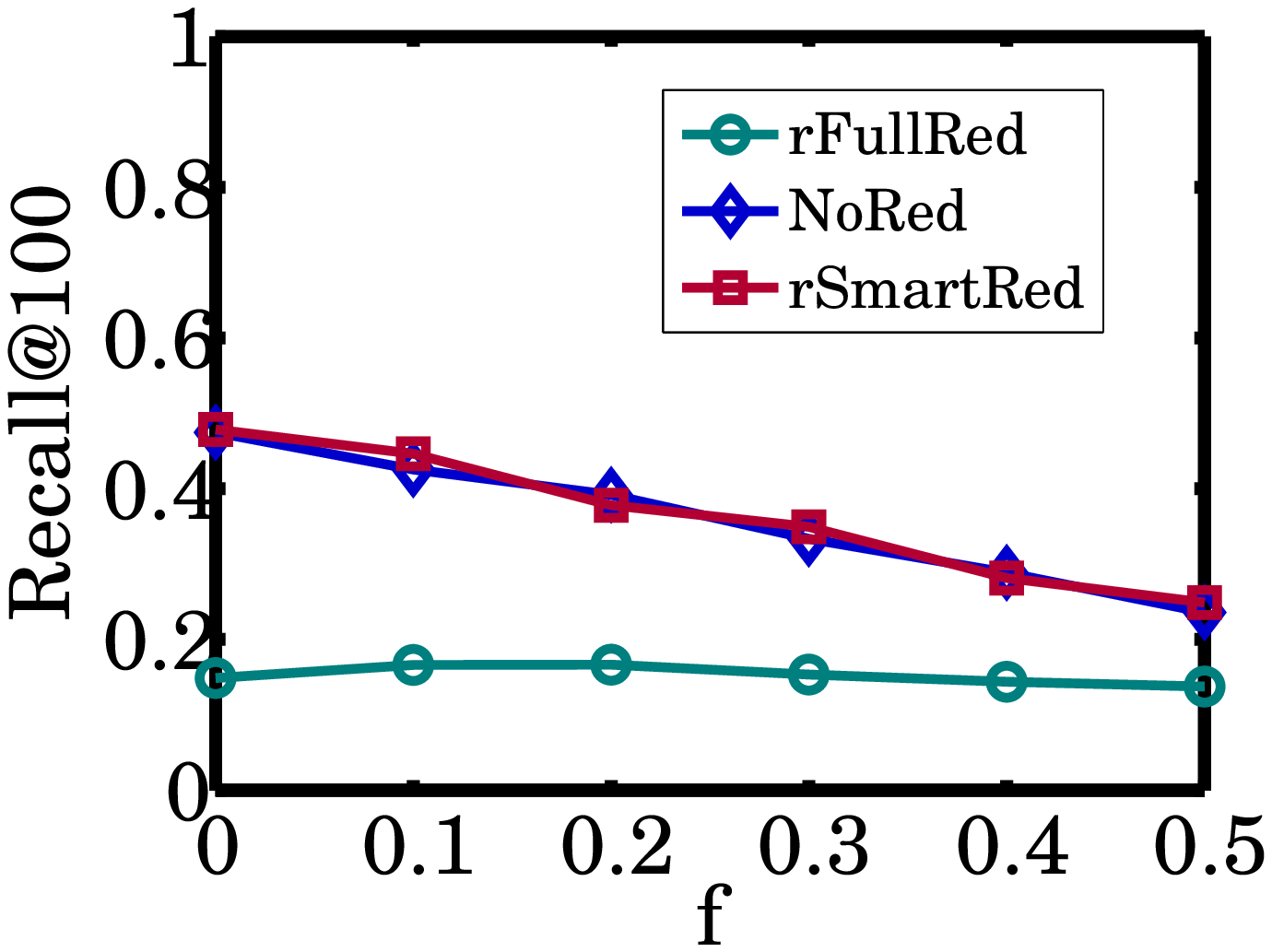}}\hspace{0.5em}   
    \subfloat[LiveJ CRCS]{\label{subfig:lj_ReDDE}\includegraphics[scale=0.4]{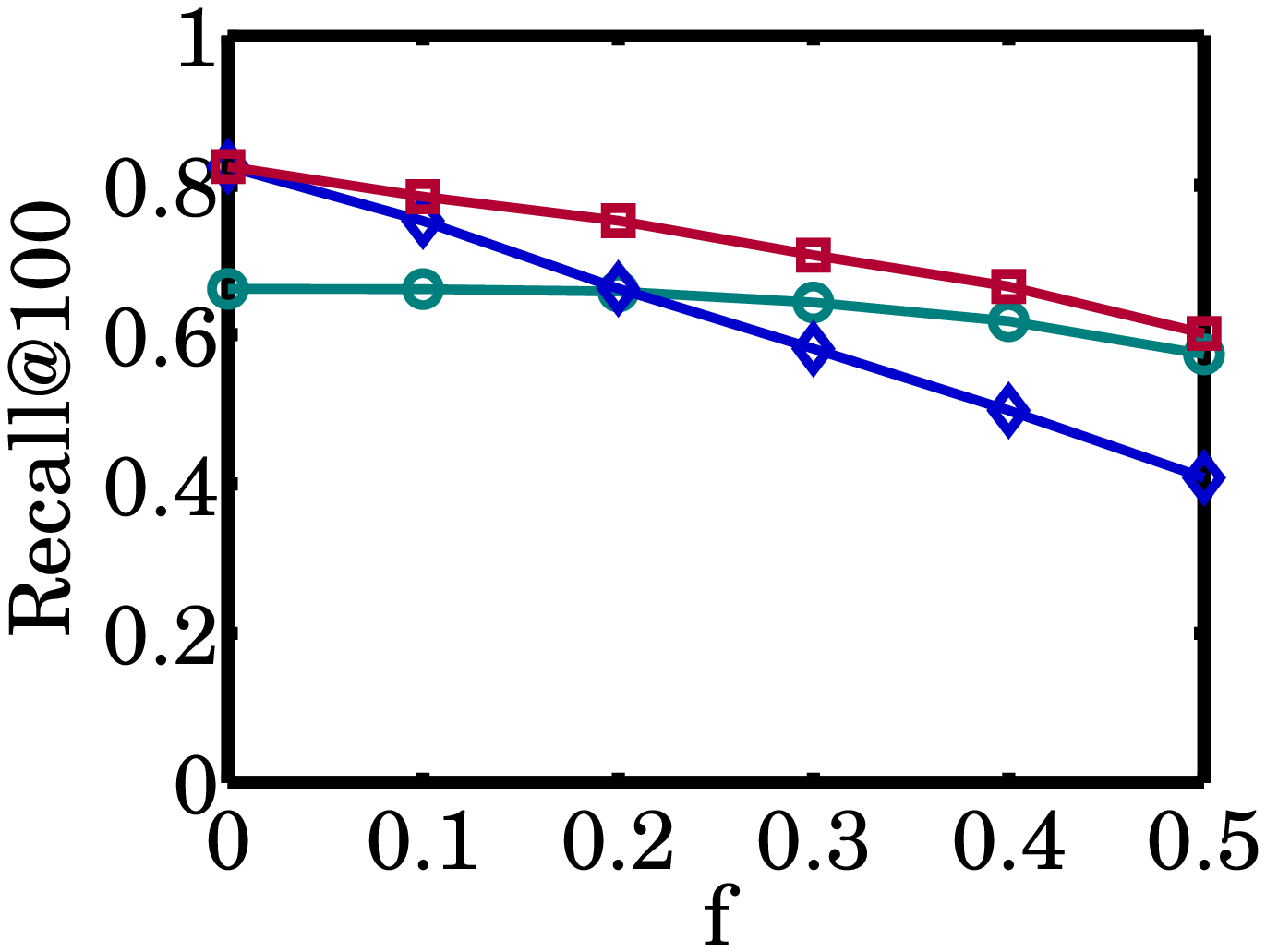}}\hspace{0.5em}    
    \subfloat[Reuters CRCS]{\label{subfig:rcv_ReDDE}\includegraphics[scale=0.4]{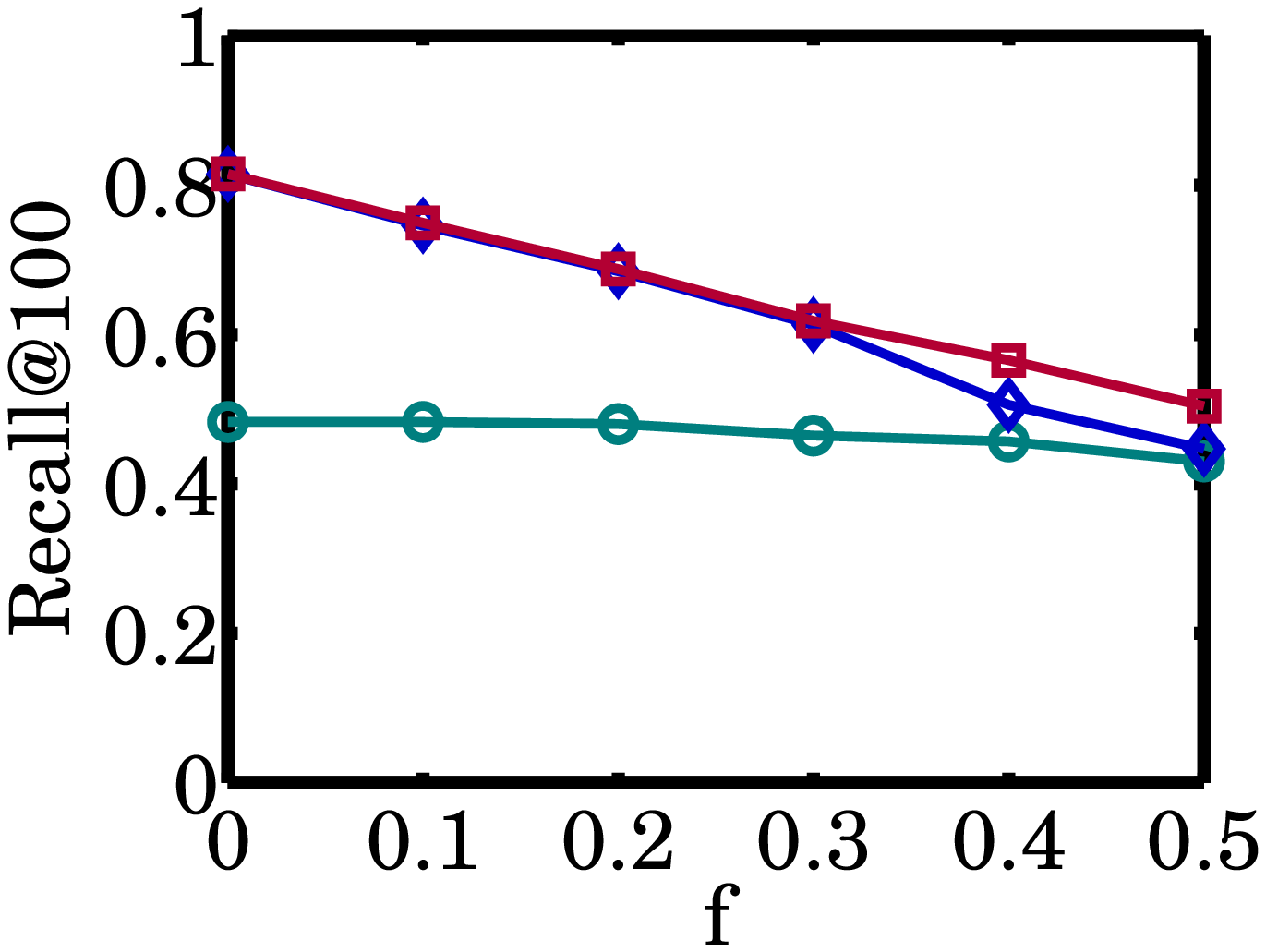}}
    \vspace{1em}
    \caption{Recall@100 for the three shard selection schemes for Replication as a function of miss probability $f$.
     \RSMARTRED{} outperforms both \RFULLRED{} and \NORED{} in all scenarios.}  
	\label{fig:emp_lj}
\end{figure*}

\COMMENT {
\paragraph*{CORI}
For every shard $D_i$ in the LSH partition,
we construct a corresponding centroid vector of the shard's features.
In order to avoid large vectors, we prune the shard vector features: 
we sort the vector's features according to their frequency in the shard,
and leave a fraction of $\beta$ most frequent features, pruning the rest.
For a given dataset, We select $\beta$ such that the number of features in CORI's shard index approximately equals 
the number of features in the shard index.
Thus, in Reuters, we set $\beta=0.25$ and in Livej we set $\beta=0.28$.
After pruning the shard vectors, we index them into the shard index.
We use Lucene for weighting the term features according to $TF$ and $ICF$.
Given a query, we execute it against the Lucene's shard index and retrieve the shards whose centroid vectors
are most similar to the query.
}

\paragraph*{Evaluation}
We evaluate search quality by measuring the average Recall@100 over an evaluation set of queries
(see Section \ref{sec:metric}).
For Reuters, we use $200$ Trec topics~\cite{ReutersTopics} 
as our evaluation set.
For LiveJ, we construct an evaluation set by randomly sampling $1\!,000$ documents from the dataset.
We confirm statistical significance using 
paired-ttest with $5\%$ significance level.

\subsection{Selection schemes for Replication}
\label{sec:emp_replication}
We study \RFULLRED{}, \NORED{}, and \RSMARTRED{} selection schemes for Replication
and show the superiority of \RSMARTRED{} over the other two.

\paragraph*{Effect of miss probability and success probability distribution}
We evaluate recall as a function of $f$, $0 \le f \le 0.5$
(we discuss here relatively high $f$ values that are not necessarily realistic
for the purpose of the demonstration; we later zoom-in on lower $f$ values).
We fix $t=5$, i.e., we select $tr=15$ shards, which are about half of the number of shards in one partition ($32$).
Figure~\ref{fig:emp_lj} presents our results 
for the Random and the CRCS-based success probability distributions
(Random's results are very similar for both datasets,
and so, we present here LiveJ's results only).

For both success probability distributions and both datasets, 
\RFULLRED{} achieves relatively stable recall for all miss probabilities.
This is since \RFULLRED{} uses the maximal number of replicas ($r$) possible for all shards it selects,
which makes the content it selects available regardless of $f$.
However, the recall \RFULLRED{} achieves is significantly lower for low miss probabilities
compared to \NORED{}.
This is since searching a large number of replicas is wasteful when misses are infrequent;
it is more beneficial to select additional shards by decreasing the number of shard replicas.
At the other extreme, \NORED{} selects a single replica of each shard.
It achieves higher recall compared to \RFULLRED{} when miss probability is low,
as it searches more distinct shards.
However, when miss probability increases, \NORED{}'s recall decreases.
This tradeoff is most pronounced in LiveJ when the success probability distribution is highly skewed 
(Figure~\ref{fig:emp_lj}\subref{subfig:lj_ReDDE}),
where \NORED{}'s recall drops below the recall of \RFULLRED{} for $f$ 
values that exceed $0.2$.
This is since in the case of a skewed success probability distribution,
the responsiveness of top-scored shards is crucial,
hence contacting their replicas is beneficial.
When the success probability distribution is uniform 
(Figurs~\ref{fig:emp_lj}\subref{subfig:lj_Random}),
\NORED{}'s recall becomes close to that of \RFULLRED{} for $f=0.5$.
This is since in this case, when a shard fails to respond,
contacting one of its replicas (\RFULLRED{}) or contacting another shard in the partition (\NORED{}) contributes 
similarly to the success probability.

As expected, for both distributions, for all values of $f$,
\RSMARTRED{}'s recall is at least as good as those of \RFULLRED{} and \NORED{}.
When the success probability distribution is close to uniform (Figure~\ref{fig:emp_lj}\subref{subfig:lj_Random}), 
\RSMARTRED{} and \NORED{} behave similarly,
since not much is gained from redundancy.
But when the success probability distribution is skewed (Figures~\ref{fig:emp_lj}\subref{subfig:lj_ReDDE} and~\ref{fig:emp_lj}\subref{subfig:rcv_ReDDE}),
as is common for many queries,
\RSMARTRED{} outperforms both \RFULLRED{} and \NORED{} by adjusting its selection to the miss probability.
For example, as demonstrated in Figure~\ref{fig:emp_lj}\subref{subfig:lj_ReDDE},
\RSMARTRED{} achieves a statistically significant higher recall then \RFULLRED{} 
for $f < 0.2$ and a statistically significant higher recall than
\NORED{} for $f > 0.2$.

\COMMENT {
\begin{figure*}[tb]
	\centering     
    \subfloat[Random]{\label{subfig:lj_Random}\includegraphics[scale=0.45]{figures/livejournal_byFailureProb_RANDOM_0_replication_r_3_t_5_k_5_recall.eps}}\hspace{0.5em}     
    \subfloat[LiveJ CRCS]{\label{subfig:lj_ReDDE}\includegraphics[scale=0.45]{figures/livejournal_byFailureProb_REDDE_0_4_replication_r_3_t_5_k_5_recall.eps}}\hspace{0.5em}    
    \subfloat[Reuters CRCS]{\label{subfig:rcv_ReDDE}\includegraphics[scale=0.45]{figures/Reuters_byFailureProb_REDDE_0_4_replication_r_3_t_5_k_5_recall.eps}}
    \vspace{1em}
    \caption{Recall@100 for the three shard selection schemes for Replication as a function of miss probability $f$.
     \RSMARTRED{} outperforms both \RFULLRED{} and \NORED{} in all scenarios.}  
  
	\label{fig:emp_lj}
\end{figure*}
}

We next zoom in on smaller $f$ values, $0 \le f \le 0.2$.
We examine three CRCS-based (non-uniform) success probability distributions for LiveJ 
by considering the following three sets of queries:
\begin{itemize}
	\item \emph{\skewedSet{}} consists of all queries in LiveJ's evaluation set. 
	\item \emph{\moreSkewedSet{}} consists of queries in LiveJ's evaluation set
	for which the success probability of the top shard is greater than $0.5$; 
    26.3\% of queries are in this category.
    \item \emph{\mostSkewedSet{}} consists of queries in LiveJ's evaluation set
	for which the success probability of the top shard is greater than $0.8$;
    only 0.092\% of queries are in this category.
\end{itemize}
Figure~\ref{fig:emp_spDistSkewed} illustrates the 
average estimated success probability of 
the five top-scored shards for each of the query sets.

\begin{figure*}[tb]
	\centering     
    \subfloat[\skewedSet{} query set]{\label{subfig:lj_sp_whole}\includegraphics[scale=0.4]{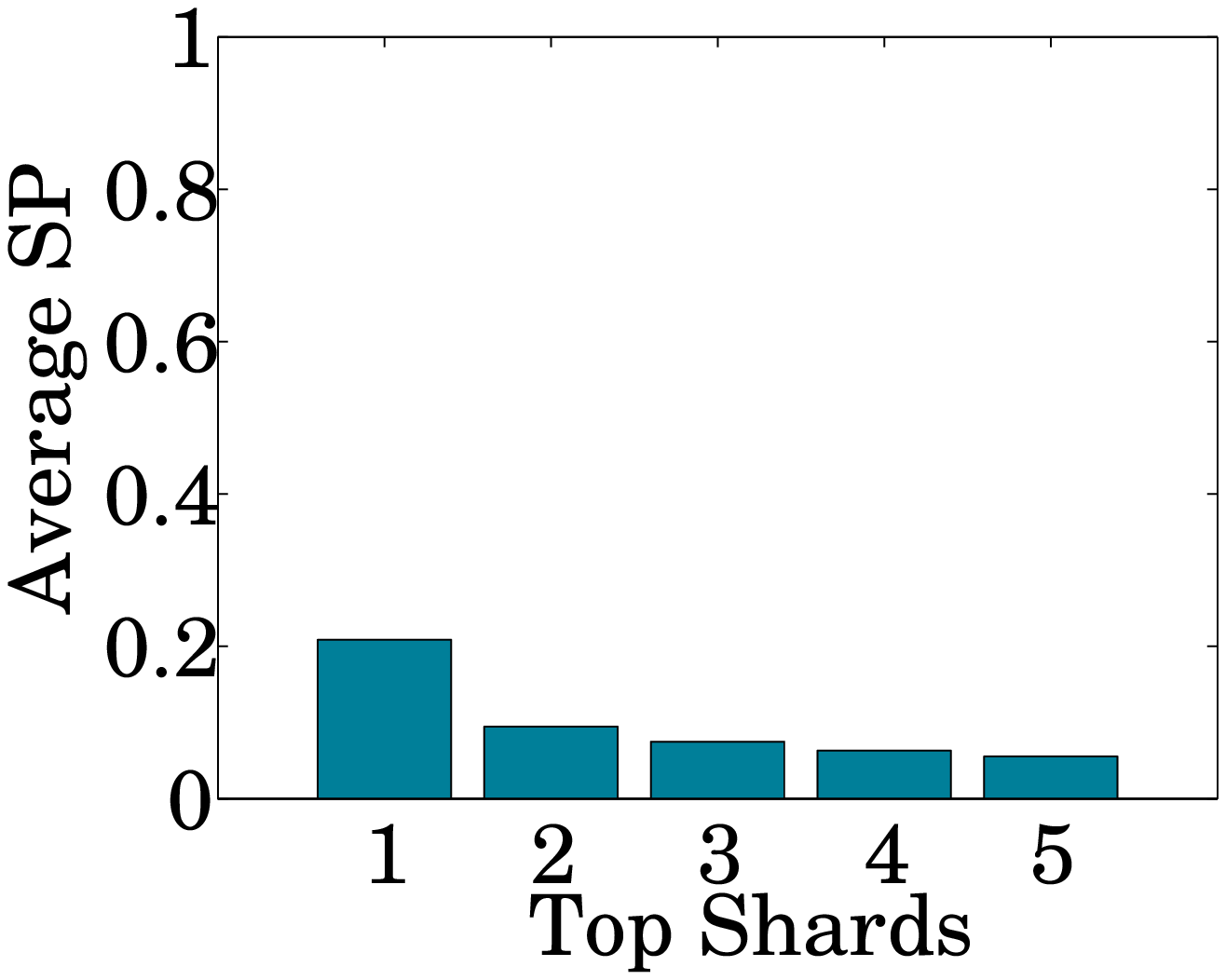}}\hspace{0.5em}  
    \subfloat[\moreSkewedSet{} query set]{\label{subfig:lj_sp_05}\includegraphics[scale=0.4]{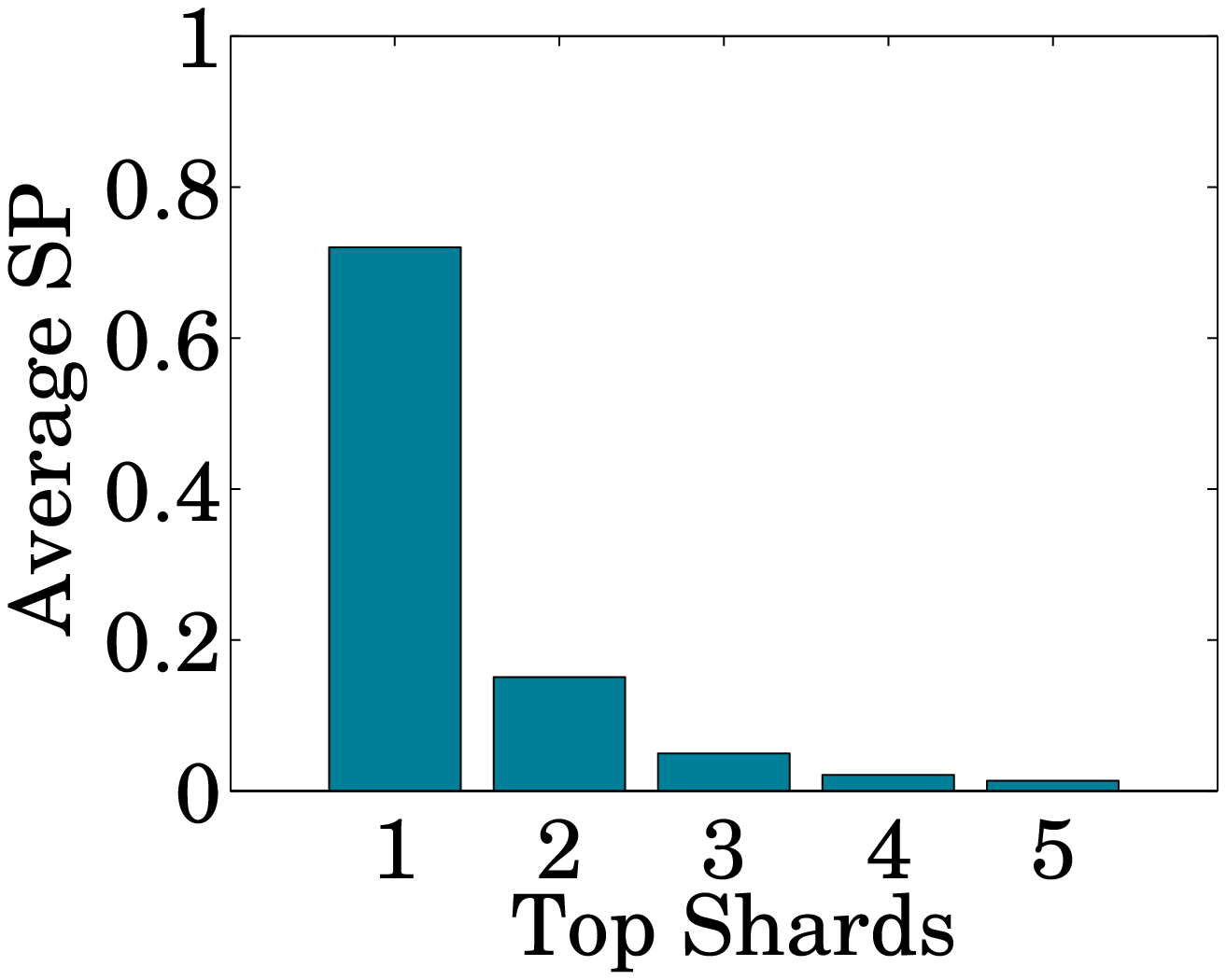}}\hspace{0.5em}    
    \subfloat[\mostSkewedSet{} query set]{\label{subfig:lj_sp_08}\includegraphics[scale=0.4]{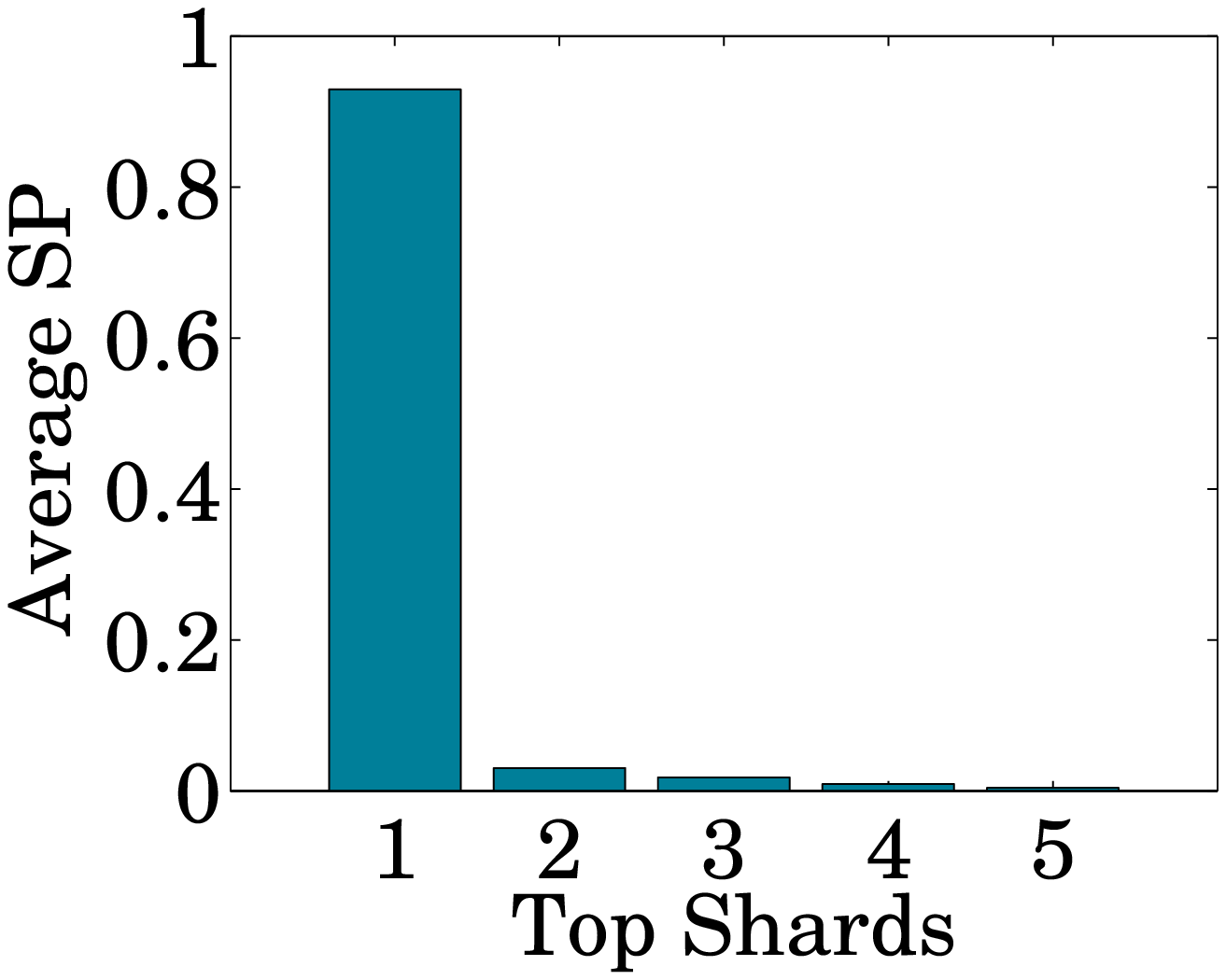}}
    \vspace{1em}
    \caption{Three non-uniform success probability distributions for LiveJ with CRCS-based shard selection, which correspond to three different query sets.}    
	\label{fig:emp_spDistSkewed}
    
   \subfloat[\skewedSet{} query set]{\label{subfig:skew_allExp}\includegraphics[scale=0.4]{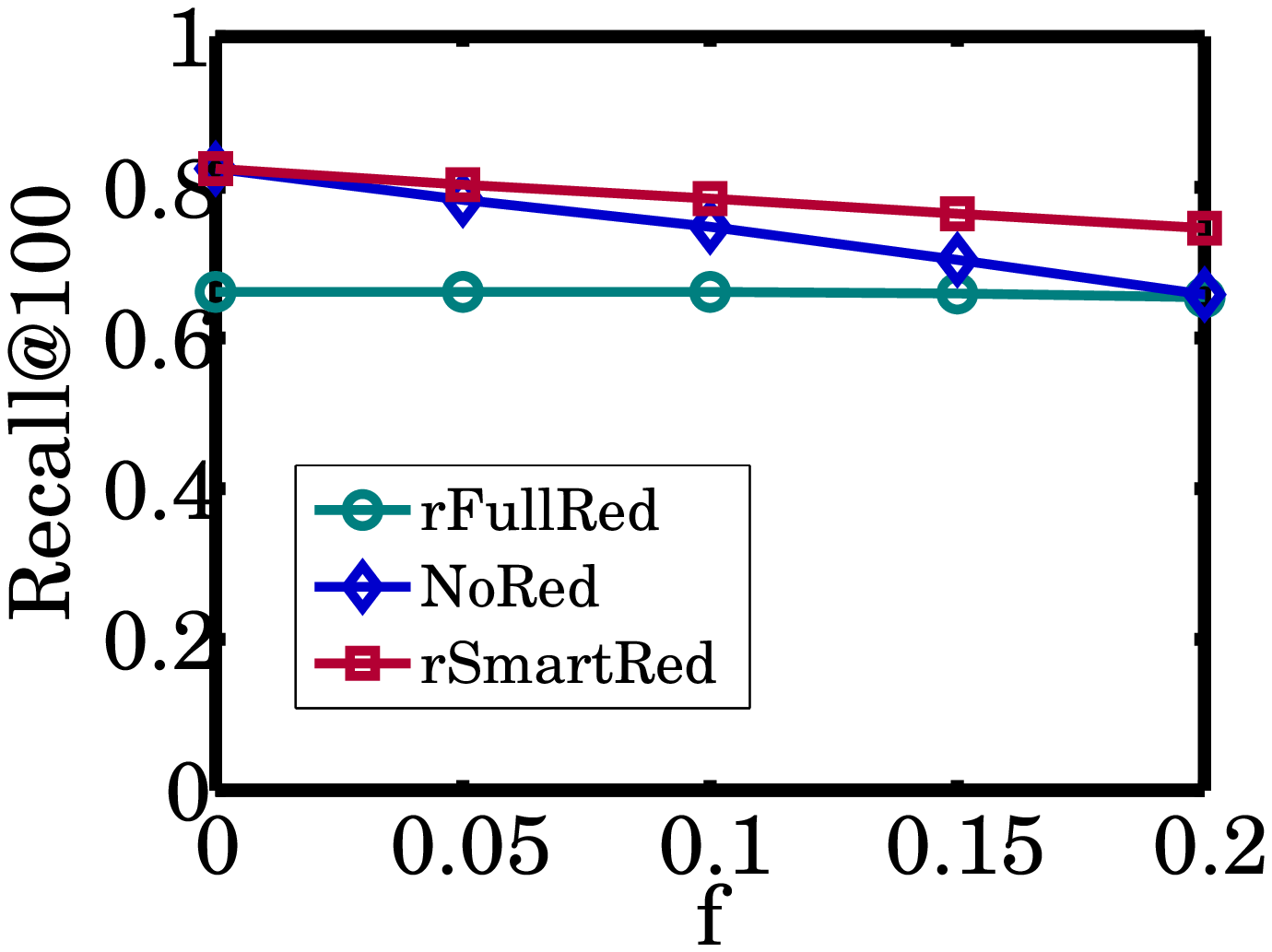}}\hspace{0.5em}     
    \subfloat[\moreSkewedSet{} query set]{\label{subfig:skew_top0_5}\includegraphics[scale=0.4]{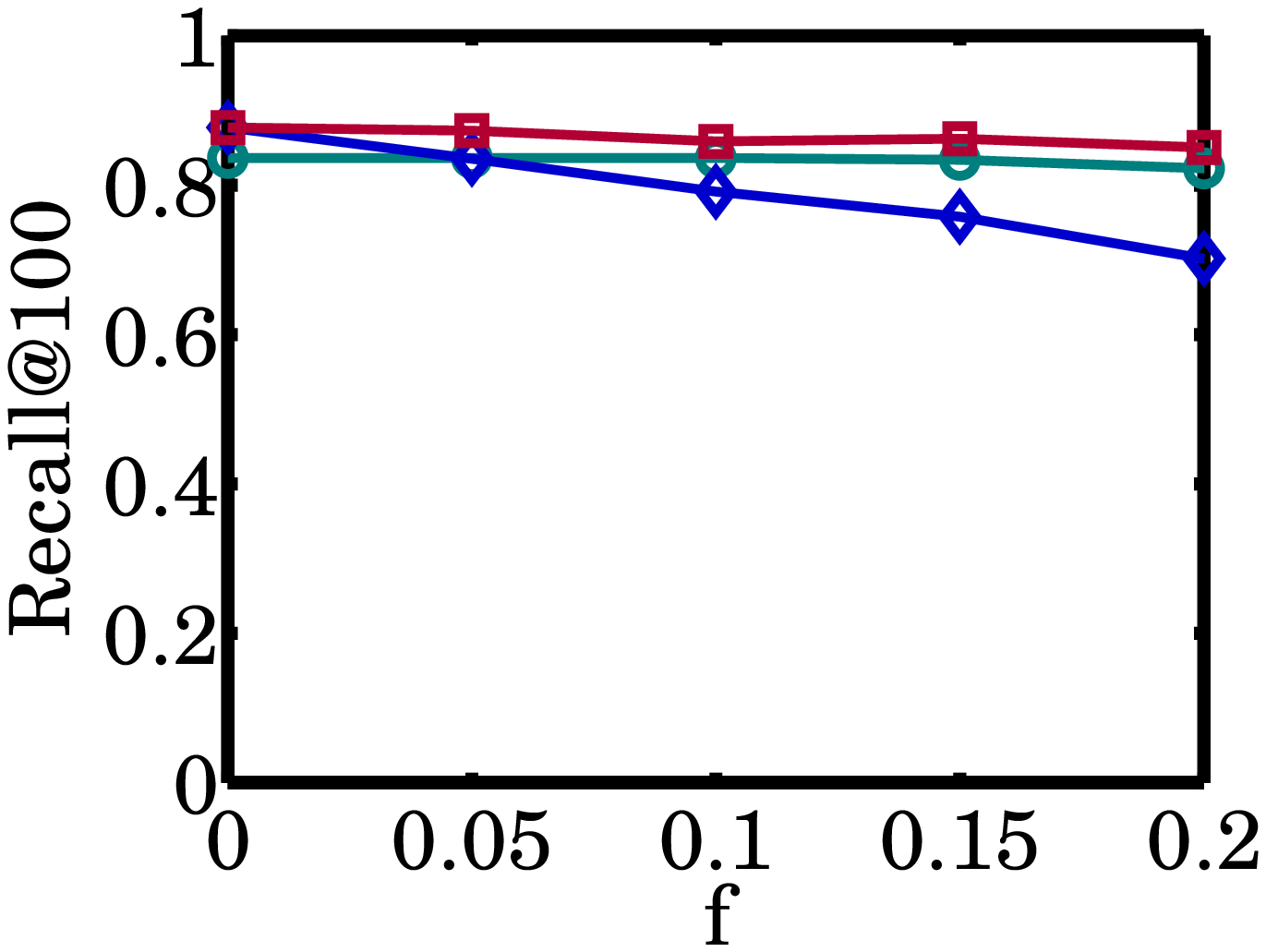}}\hspace{0.5em}    
    \subfloat[\mostSkewedSet{} query set]{\label{subfig:skew_top0_8}\includegraphics[scale=0.4]{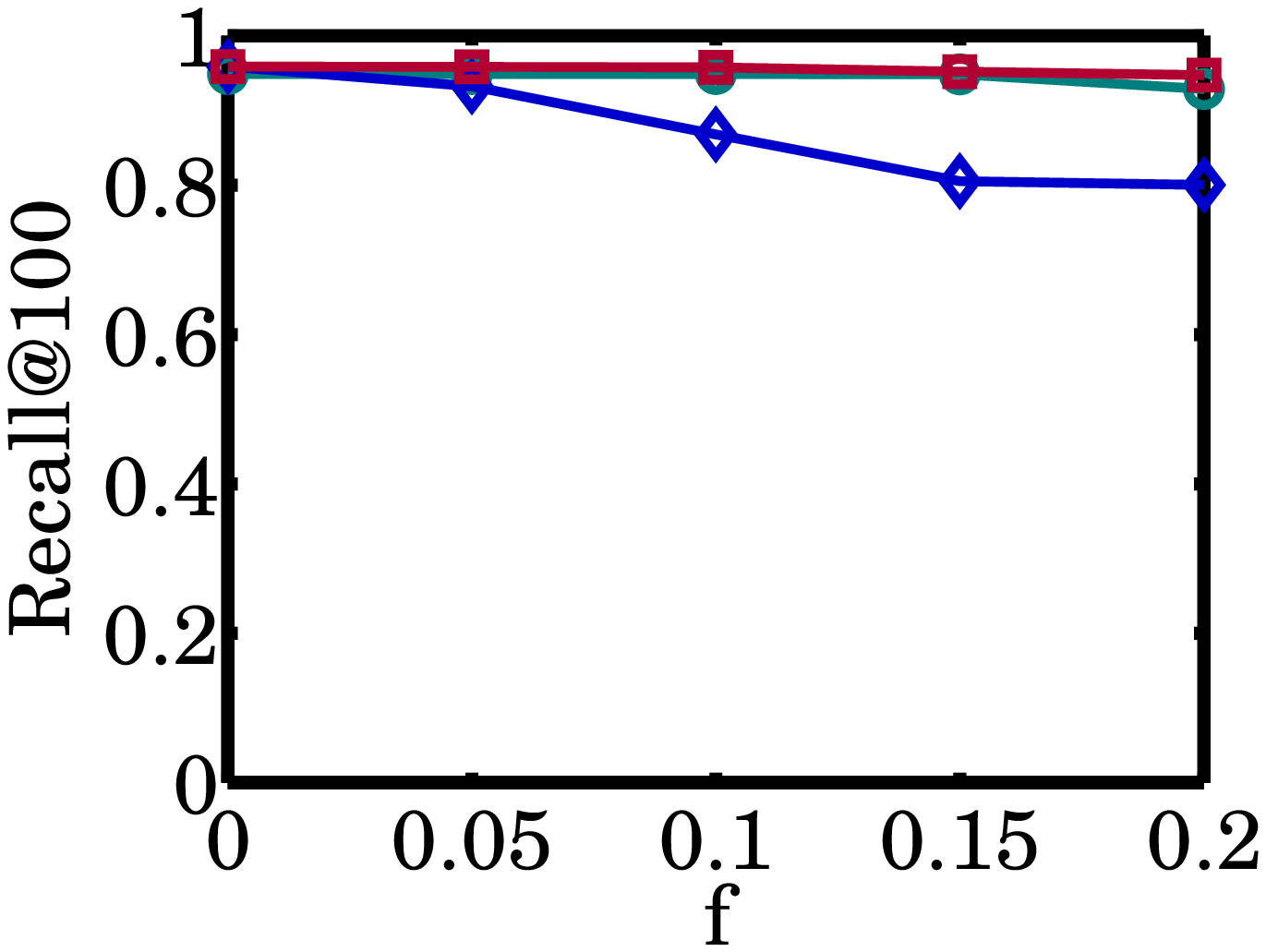}}   
    \vspace{1em}
    \caption{LiveJ Recall@100 for the three shard selection schemes for Replication as a function of miss probability $f$ for three different query sets, each inducing a different success probability distribution. \RSMARTRED{} outperforms both \RFULLRED{} and \NORED{} in all scenarios.}  
	\label{fig:skew_lj}
    
        \subfloat[LiveJ Random]{\label{subfig:lj_Random_t}\includegraphics[scale=0.4]{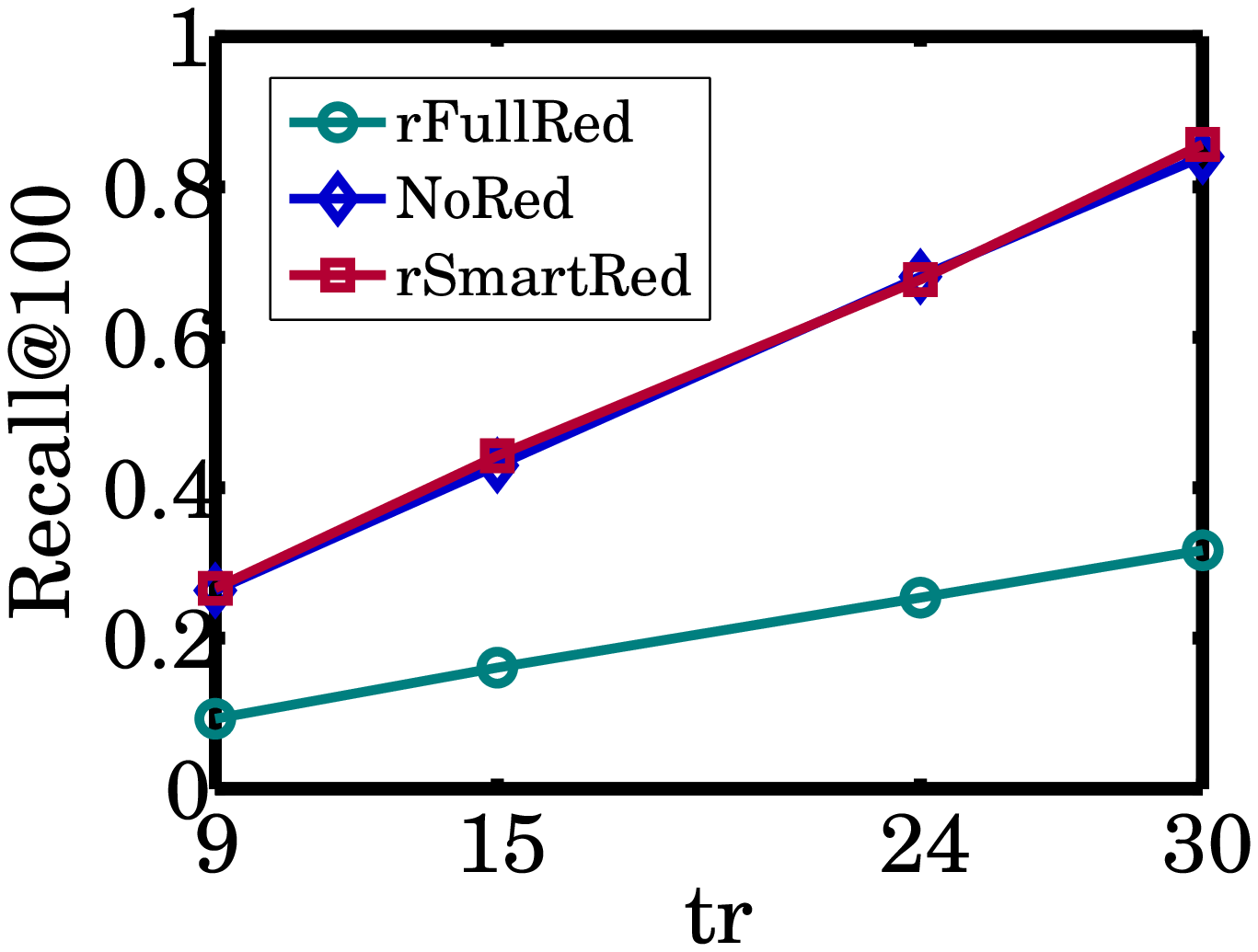}}\hspace{1em} 
    \subfloat[LiveJ CRCS]{\label{subfig:lj_ReDDE_t}\includegraphics[scale=0.4]{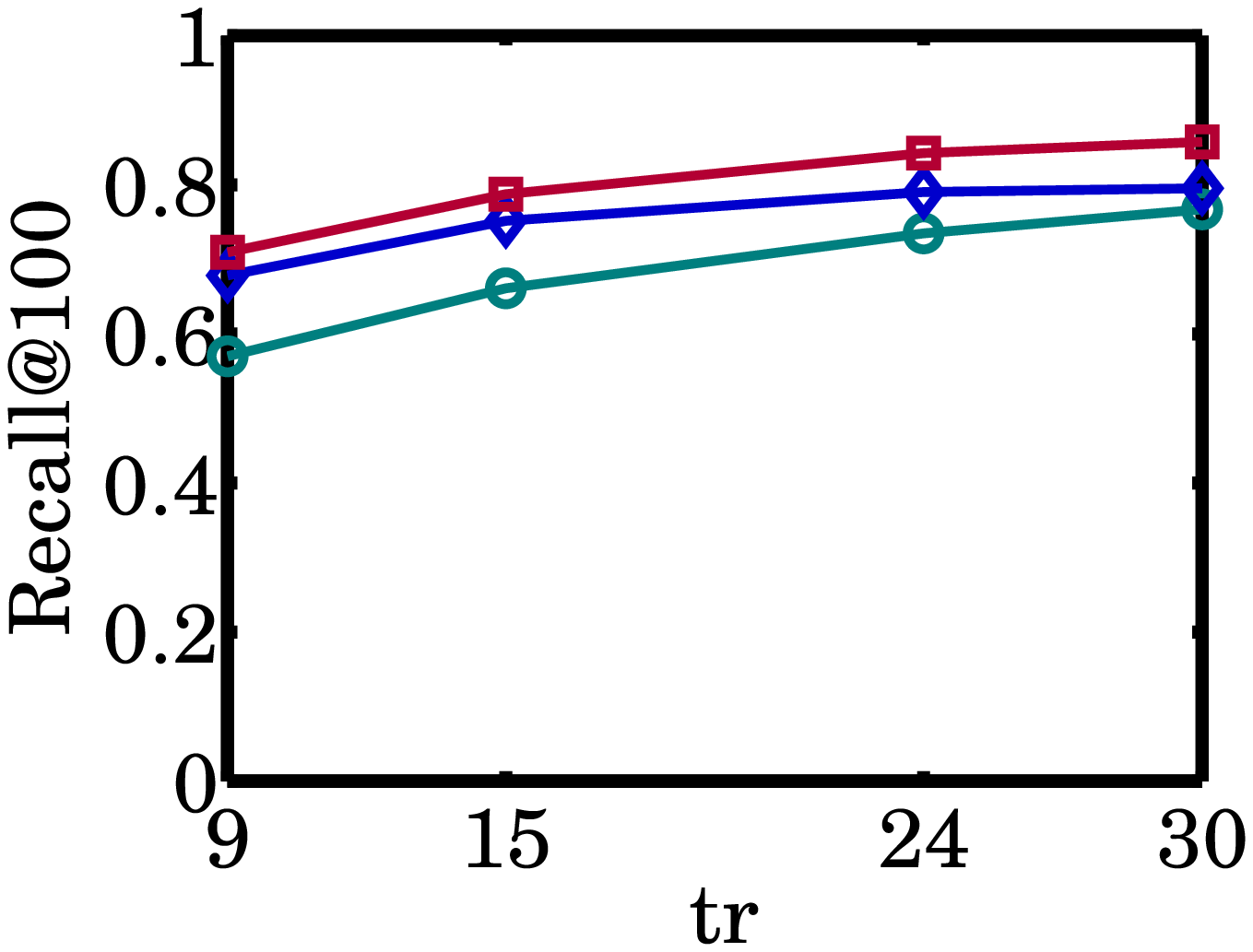}} 
    \vspace{1em}
    \caption{LiveJ Recall@100 for the three shard selection schemes for Replication as a function of the number of selected shards ($tr$).
    \RSMARTRED{} outperforms both \RFULLRED{} and \NORED{} in all scenarios.}  
	\label{fig:emp_byNShards}
    
    \subfloat[By miss probability]{\label{subfig:both_f}\includegraphics[scale=0.4]{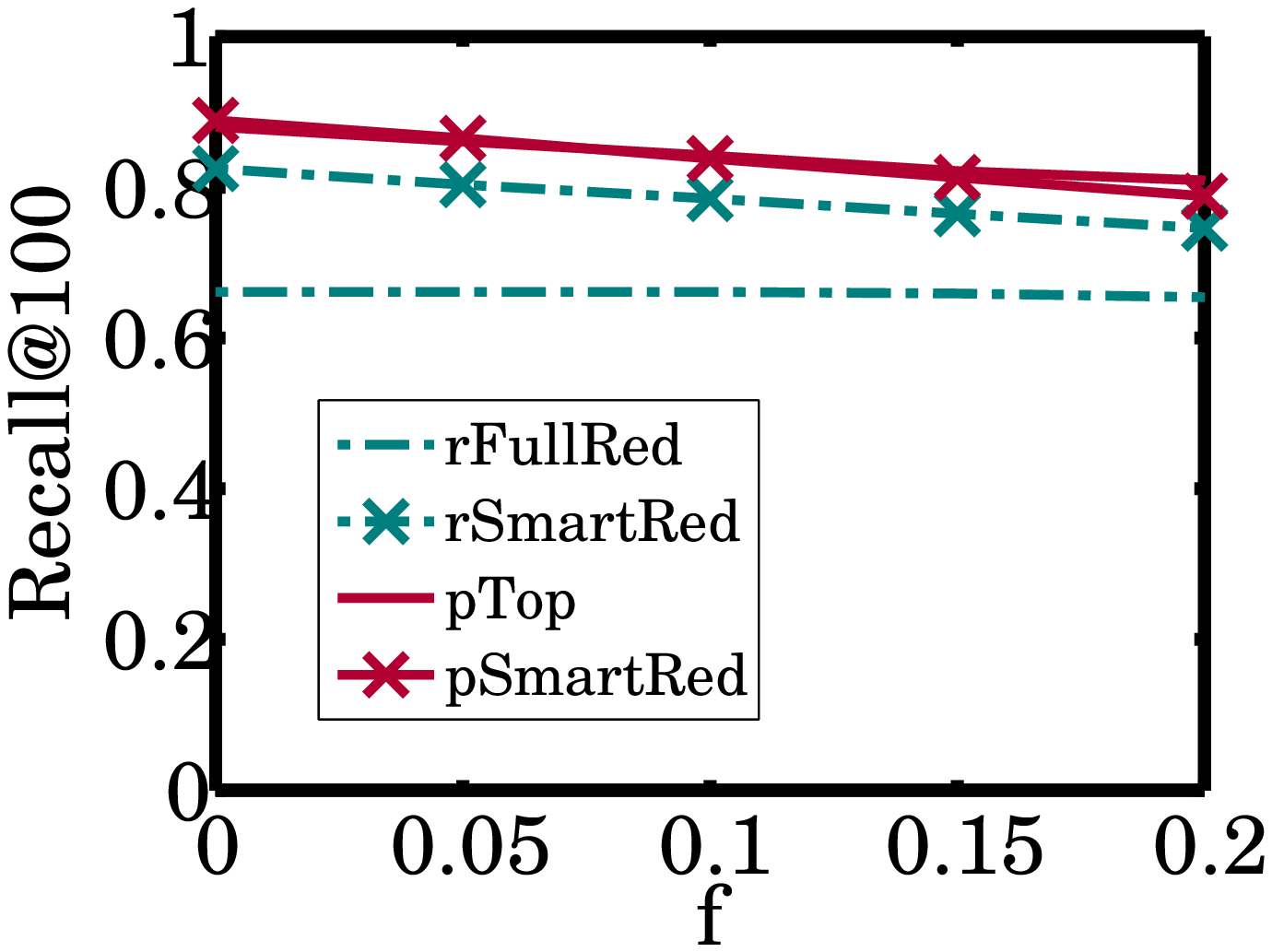}}\hspace{1em} 
    \subfloat[By number of selected shards]{\label{subfig:both_t}\includegraphics[scale=0.4]{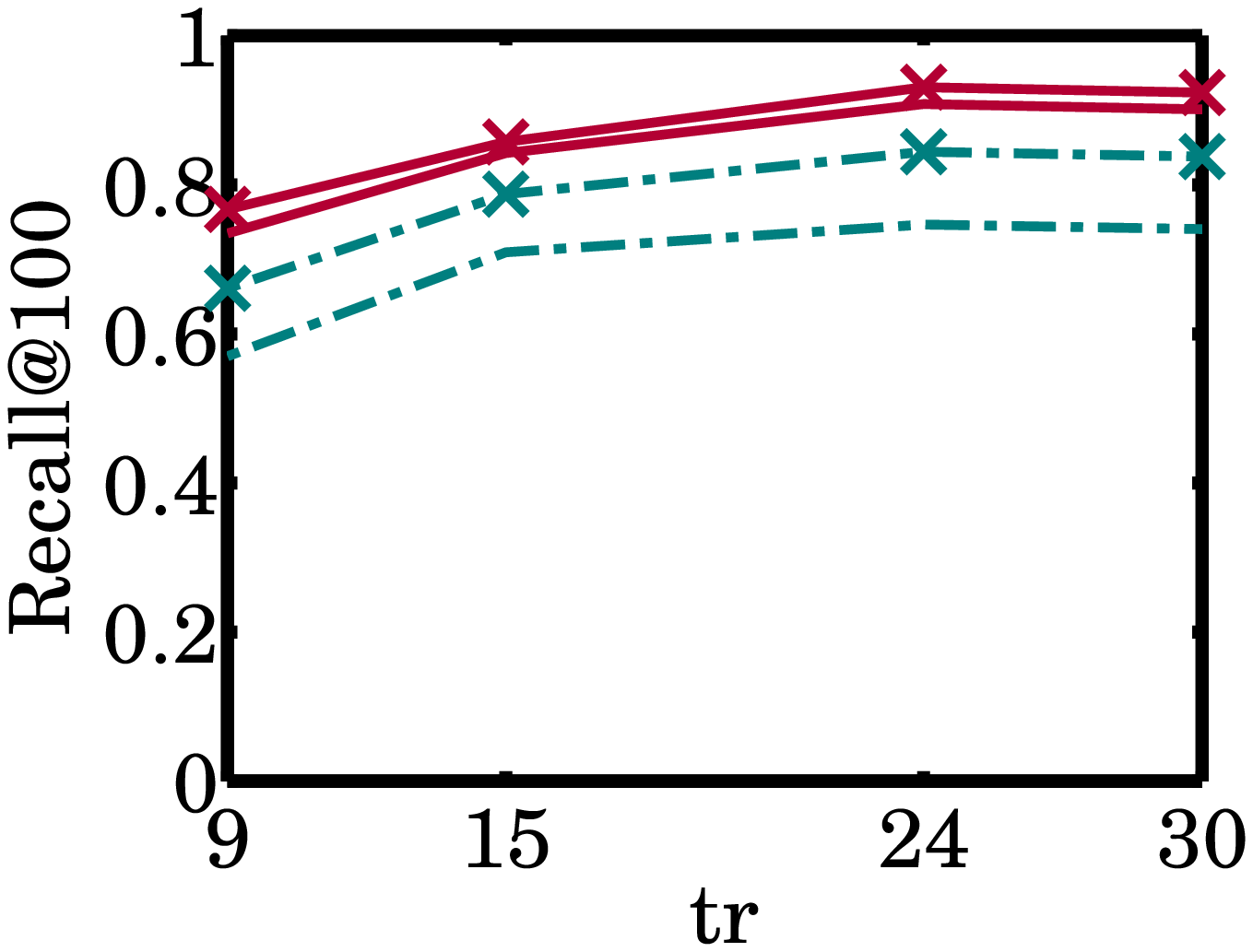}}
    \vspace{1em}
    \caption{LiveJ Recall@100 with Replication and Repartition for a skewed success probability distribution. Repartition outperforms Replication.}
    \label{fig:emp_RvsP}   
\end{figure*}

We compare the recall of the three selection schemes over each of the three
query sets and present our results in Figure~\ref{fig:skew_lj}.
Figure~\ref{fig:skew_lj}\subref{subfig:skew_allExp} depicts recall measured over the \skewedSet{} query set.
As we have previously seen, 
\NORED{} outperforms \RFULLRED{} for $0 \le f \le 0.2$,
however, \NORED{}'s recall decreases as $f$ increases until it reaches \RFULLRED{}'s recall for $f=0.2$.
As \RSMARTRED{} is optimal, it outperforms both.

Next, in Figure~\ref{fig:skew_lj}\subref{subfig:skew_top0_5},
we depict recall for the \moreSkewedSet{} query set,
which has a more skewed success probability distribution 
than \skewedSet{}'s.
Here too, \NORED{}'s recall is higher than \RFULLRED{}'s for low miss probabilities,
but it drops below \RFULLRED{} at a smaller $f$ value ($0.05$).
This is since here, the responsiveness of the top shards is more crucial due to their higher success probability.
Hence, Replication becomes valuable for lower miss probability values than in the previous case.
As before, \RSMARTRED{} selects the number of replicas in an optimal manner and so outperforms
both \RFULLRED{} and \NORED{}.

Finally, Figure~\ref{fig:skew_lj}\subref{subfig:skew_top0_8} examines the recall over the \mostSkewedSet{} query set, 
which has the most skewed success probability distribution.
In this extreme case, the average success probability of the top shard is $0.92$,
hence searching a single shard -- the top one -- is crucial.
As both \RFULLRED{} and \NORED{} select the top shard 
due to its high success probability (Observation~\ref{obs:topPerPartition}),
they both achieve high recall ($0.96$) when misses are infrequent ($f \le 0.05$).
When the miss probability increases ($f > 0.05$), \NORED{}'s recall drops below \RFULLRED{}'s recall as \NORED{} does not employ redundancy. 
For all $f$ values that we examined, \RSMARTRED{} is optimal.

\COMMENT {
\begin{figure*}[tb]
	\centering     
    \subfloat[\skewedSet{} query set]{\label{subfig:skew_allExp}\includegraphics[scale=0.4]{figures/livejournal_byFailureProb_REDDE_0_4_replication_r_3_t_5_k_5_recall_allExp_wLegend.eps}}\hspace{2em}     
    \subfloat[\moreSkewedSet{} query set]{\label{subfig:skew_top0_5}\includegraphics[scale=0.4]{figures/livejournal_byFailureProb_REDDE_0_4_replication_r_3_t_5_k_5_recall_top0_5.eps}}\hspace{2em}    
    \subfloat[\mostSkewedSet{} query set]{\label{subfig:skew_top0_8}\includegraphics[scale=0.4]{figures/livejournal_byFailureProb_REDDE_0_4_replication_r_3_t_5_k_5_recall_top0_8.eps}}   
    \vspace{1em}
    \caption{LiveJ Recall@100 for the three shard selection schemes for Replication as a function of miss probability $f$. We consider three different query sets, each inducing a different success probability distribution. \RSMARTRED{} outperforms both \RFULLRED{} and \NORED{} in all scenarios.}  
	\label{fig:skew_lj}
\end{figure*}
}

\paragraph*{Effect of the number of selected shards}
We wrap up the study of Replication by varying $tr$, i.e., the number of selected shards.
We experiment with $t \in \left\{3,5,8,10\right\}$ values.
As we fix $r=3$, this yields up to $30$ shards,
which is almost the number of shards in the partition: $n=32$.
We fix $f=0.1$.
Figure~\ref{fig:emp_byNShards} depicts our results for the LiveJ dataset
(we omit Reuters results, which do not provide additional insight).
For all selection schemes, 
the recall 
increases with
the number of selected shards $tr$, as expected.
Second, for all $tr$ values that we examine, \RSMARTRED{}'s recall is equal to or greater than the recall of 
\NORED{} and \RFULLRED{}, which confirms our theory. 

Searching shard replicas is useless in case of a uniform success probability distribution.
Indeed, in this case,
\RFULLRED{} performs worse than \NORED{} and \RSMARTRED{} (Figure~\ref{fig:emp_byNShards}\subref{subfig:lj_Random_t}).
\RFULLRED{}'s inferiority becomes more pronounced as $tr$ increases,
since the number of replicas that it wastefully selects increases.
When the distribution is highly skewed (Figure~\ref{fig:emp_byNShards}\subref{subfig:lj_ReDDE_t}),
few shards have high success probabilities,
hence searching additional shards becomes unproductive at some point.
Indeed, we observe a diminishing returns in \NORED{}'s recall when $tr$ increases.
In contrast, \RSMARTRED{} continues to improve by selecting replicas of high probability shards.
\COMMENT {
\begin{figure*}[tb]
	\centering     
    \subfloat[LiveJ Random]{\label{subfig:lj_Random_t}\includegraphics[scale=0.4]{figures/livejournal_byNumShards_RANDOM_0_replication_r_3_f_0_1_k_5_recall.eps}}\hspace{1em} 
    \subfloat[LiveJ CRCS]{\label{subfig:lj_ReDDE_t}\includegraphics[scale=0.4]{figures/livejournal_byNumShards_REDDE_0_4_replication_r_3_f_0_1_k_5_recall.eps}} 
    \vspace{1em}
    \caption{LiveJ Recall@100 for the three shard selection schemes for Replication as a function of the number of selected shards ($tr$).
    \RSMARTRED{} outperforms both \RFULLRED{} and \NORED{} in all scenarios.}  
	\label{fig:emp_byNShards}
    
    \subfloat[By miss probability]{\label{subfig:both_f}\includegraphics[scale=0.4]{figures/livejournal_byFailureProb_REDDE_0_4_both_r_3_t_5_k_5_both_recall.eps}}\hspace{1em} 
    \subfloat[By number of selected shards]{\label{subfig:both_t}\includegraphics[scale=0.4]{figures/livejournal_byNumShards_REDDE_0_4_both_r_3_f_0_1_k_5_both_recall.eps}}
    \vspace{1em}
    \caption{LiveJ Recall@100 with Replication and Repartition for a skewed success probability distribution. Repartition outperforms Replication.}
    \label{fig:emp_RvsP}   
\end{figure*}
}

\subsection{Replication vs. Repartition}
\label{sec:emp_repartition}
We move on to compare between Replication and Repartition as shown in Figure~\ref{fig:emp_RvsP}.
\NORED{} is identical for both redundancy methods,
hence we omit it from the comparison.
As real deployments attempt to maintain a low miss probability,
we experiment with $0 \le f \le 0.2$.
Additionally, as real deployments attempt to
use good predictors for shard selection, 
we experiment with
the CRCS
success probability distribution.

Figure~\ref{fig:emp_RvsP}\subref{subfig:both_f} depicts recall as a function of miss probability
for a fixed $t=5$.
According to \PSMARTRED{}'s specification,
\RSMARTRED{} and \PSMARTRED{} select the same number of shards per partition.
\PSMARTRED{} achieves a statistically significant higher recall than \RSMARTRED{} thanks to using Repartition,
which confirms our analysis.
Similarly, \RFULLRED{} and \PTOP{}
select the same number of shards per partition, $t$.
Here as well, \PTOP{} achieves a statistically significant improvement over \RFULLRED{} for the same reason.
Overall, Repartition achieves a statistically significant higher recall than Replication for 
low miss probabilities and skewed success probabilities,
which reflects an important practical use case for real deployments.
In Figure~\ref{fig:emp_RvsP}\subref{subfig:both_t} we fix $f=0.1$ and vary $tr$.
We observe Repartition's superiority
for all examined $tr$ values.
\COMMENT {
\begin{figure}[tb]
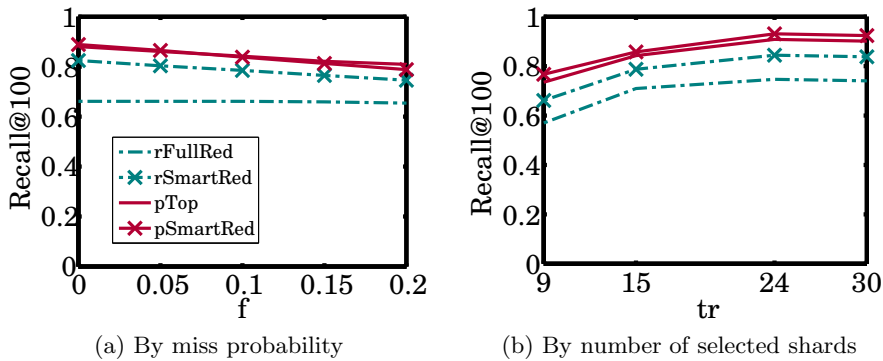

	\centering 
    \subfloat[By miss probability]{\label{subfig:both_f}\includegraphics[scale=0.4]{figures/livejournal_byFailureProb_REDDE_0_4_both_r_3_t_5_k_5_both_recall.eps}}\\
    \subfloat[By number of selected shards]{\label{subfig:both_t}\includegraphics[scale=0.4]{figures/livejournal_byNumShards_REDDE_0_4_both_r_3_f_0_1_k_5_both_recall.eps}}
    \vspace{1em}
    \caption{LiveJ Recall@100 with Replication and Repartition for a skewed success probability distribution. Repartition outperforms Replication.}
    \label{fig:emp_RvsP}   
\end{figure}
}

\COMMENT {
\begin{figure*}[tb]
	\centering  

	\subfloat[Uniform distribution by Random]{\label{subfig:rand}\includegraphics[scale=0.4]{figures/sp_vs_recall_ljRandom.eps}}  
    \subfloat[SP distribution by CRCS]{\label{subfig:redde}\includegraphics[scale=0.4]{figures/sp_vs_recall_ljSkewed.eps}}
    \vspace{2em}
\caption{Comparison of analytical vs. empirical search quality (SP vs. recall respectively) with Replication as a function of miss probability ($f$).}
    \label{fig:sp_vs_recall}
    
    \subfloat{\includegraphics[scale=0.4]{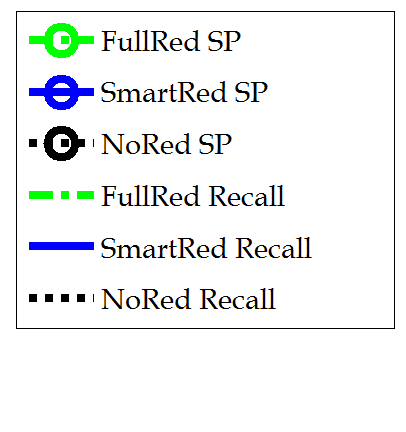}} 
}

\COMMENT{
We wrap up the discussion by comparing the empirical recall
to the theoretical success probability.
For each shard selection scheme,
we compute its success probability according to Lemma~\ref{lemma:repRrows}:
For CRCS, we apply the approximation of the success probability (shown in the left column of our figures).
For Random, we apply a uniform probability.
We use $r=3$ and $t=5$ as in our experiments.
Figure~\ref{fig:sp_vs_recall} depicts the comparison for the LiveJ dataset. 
As the graphs show, the empirical results follow our analysis.
For Random, the recall converges with the success probability.
For CRCS, the recall and the success probability curves slightly differ,
since the shard success probability distribution is approximate.
}

\COMMENT {
We move on to comparing between Replication and Repartition.
We compare Replication's selection with the corresponding selection 
under Repartition,
by preserving the same
number of shards selected per partition in both methods
(see Section~\ref{sec:analysis_repartition}).
\NORED{} is identical for both redundancy methods,
hence we omit its results and
we focus on \RFULLRED{} and \RSMARTRED{}.
As real deployments attempt to use good predictors for
shard selection and maintaining low miss probability
we experiment with 1) a highly skewed success probability distribution, demonstrated by CRCS over LiveJ,
and 2) $0 \le f \le 0.2$.

Figure~\ref{fig:emp_cmp} depicts our results.
As predicted by our analysis, in all experiments,
for all miss probabilities,
Repartition's recall is at least as good as  
that of Replication.
When miss probability increases, 
both suffer from more result misses and
the advantage of Repartition over Replication decreases.

For \RSMARTRED{} (Figure~\ref{fig:emp_cmp}\subref{subfig:both_SmartRed}), 
the advantage of Repartition is less pronounced, 
as \RSMARTRED{} also limits the search over exact shard copies. 
Finally, Repartition significantly outperforms \RFULLRED{}
thanks to searching over non-exact shard copies.

\begin{figure}[tb]
	\centering       
    \subfloat[\RSMARTRED{}]{\label{subfig:both_SmartRed}\includegraphics[scale=0.25]{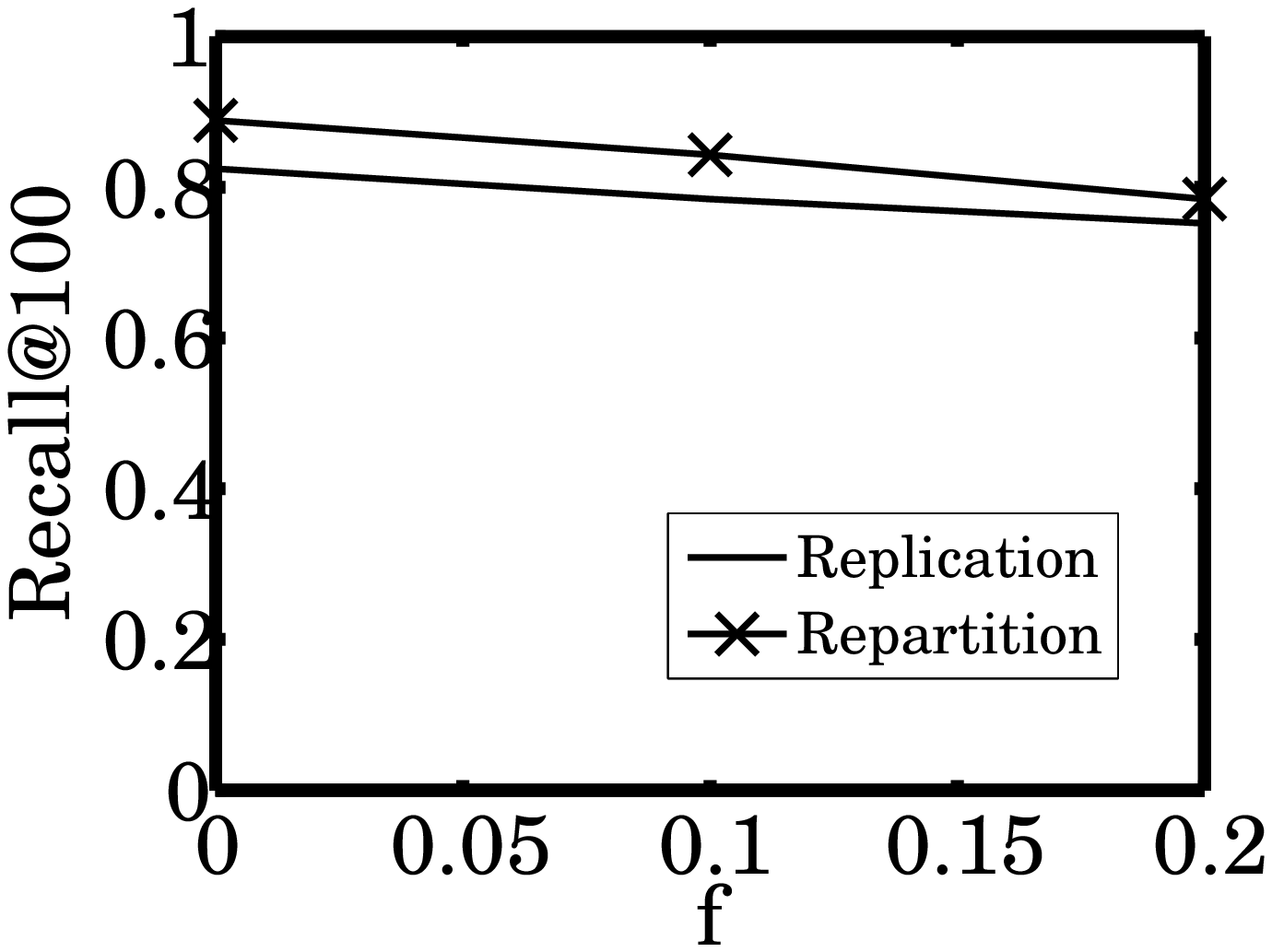}}\hspace{1em}
    \subfloat[\RFULLRED{}]{\label{subfig:both_FullRed}\includegraphics[scale=0.25]{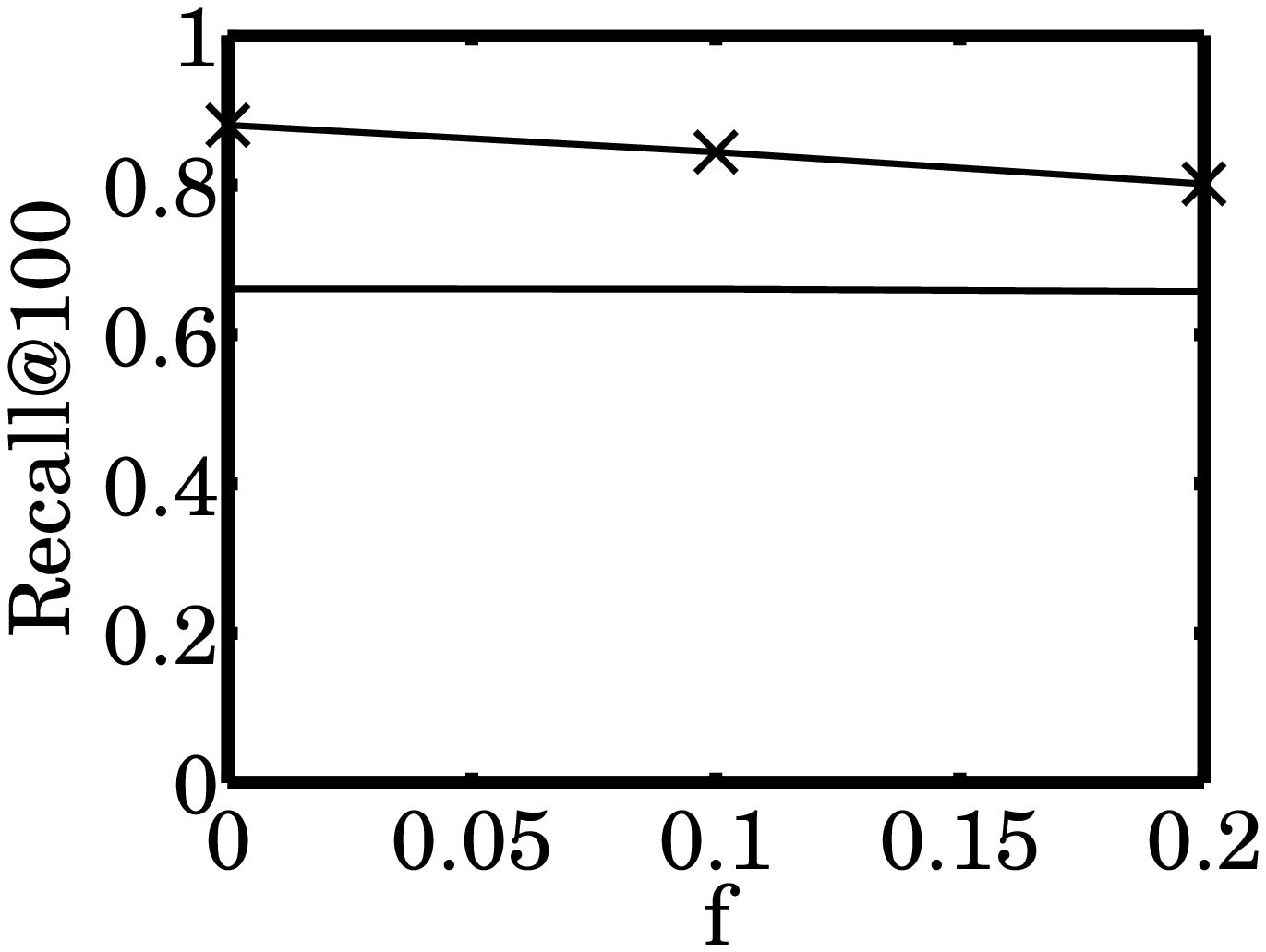}}\hspace{1em}
    \vspace{2em}
    \caption{Recall@100 comparison of Replication vs. Repartition for infrequent result misses.}  
    \label{fig:emp_cmp}   
\end{figure}
}

\section{Conclusions}
\label{sec:conc}
We studied tail-tolerant DiS, 
which is crucial for real-world distributed search services.
We observed that tail-tolerant DiS is amenable to a non-binary availability model 
based on degradation in search quality.
We showed that in this context, Replication is not ideal for mitigating result misses,
as searching exact shard copies can be wasteful.
We introduced two strategies that better fit tail-tolerant DiS.
First, we proposed to consider miss probability
as well as each shard's probability to satisfy the query
for selecting shards.
We devised \RSMARTRED{},
an optimal shard selection scheme for Replication.
Second, we proposed Repartition, an alternative approach for applying redundancy.
Repartition constructs independent index partitions instead of exact copies,
which improves search quality over Replication in practical scenarios.

Our work considers a static index during query processing.
It would be interesting for future work to explore tail-tolerance at indexing stage
when using a dynamic index.
\balance 
\clearpage
\bibliographystyle{abbrv}
\bibliography{FTIR} 
\end{document}